\newcommand{\ie}{{\em i.e., }}
\newcommand{\eg}{{\em e.g., }}
\newcommand{\etal}{{\em et.~al.}}
\newcommand{\calL}{\mathcal{L}}
\newcommand{\calC}{\mathcal{C}}
\newcommand{\calS}{\mathcal{S}}
\newcommand{\calG}{\mathcal{G}}
\newcommand{\prl}{P_{\mathit{RL}}}
\newcommand{\pdummy}{P'_{\mathit{RL}}}
\newcommand{\leader}{\mathtt{leader}}
\newcommand{\dist}{\mathtt{distL}}
\newcommand{\bull}{\mathtt{bullet}}
\newcommand{\shield}{\mathtt{shield}}
\newcommand{\signal}{\mathtt{signal}}
\newcommand{\vblank}{\vspace{0.3cm}}
\newcommand{\distl}{d_{L}}
\newcommand{\distr}{d_{R}}
\newcommand{\ect}{\mathit{ECT}}
\newcommand{\seqr}{\mathit{seq}_R}
\newcommand{\seql}{\mathit{seq}_L}
\newcommand{\tdel}{T_{\mathrm{del}}}
\newcommand{\peaceful}{\mathit{Peaceful}}
\newcommand{\modest}{\mathit{Modest}}
\newcommand{\secure}{\mathit{Secure}}
\newcommand{\ex}{\mathbf{E}}
\newcommand{\poly}{\mathit{poly}}
\newcommand{\predef}{\stackrel{\mathrm{def}}{\equiv}}
\newcommand{\relmiddle}[1]{\mathrel{}\middle#1\mathrel{}} 
\newcommand{\longmid}{\relmiddle{|}}
\newcommand{\sch}{\mathbf{\Gamma}}
\newcommand{\outputs}{\pi_{\mathit{out}}} 
\newcommand{\call}{\mathcal{C}_{\mathrm{all}}}
\newcommand{\cpb}{\mathcal{C}_{\mathrm{PB}}}
\newcommand{\cni}{\mathcal{C}_{\mathrm{NI}}}
\newcommand{\srl}{\mathcal{S}_{\mathrm{RL}}}
\newtheorem{theorem}{Theorem}
\newtheorem{lemma}{Lemma}
\newtheorem{clm}{Claim}
\theoremstyle{definition}
\newtheorem{dfn}{Definition}
\begin{document}
\title{Time-Optimal Self-Stabilizing Leader Election on Rings in Population Protocols
\thanks{
This work was supported by JSPS KAKENHI Grant Numbers 19H04085 and 20H04140.
}
}
%
%

\author[1]{Daisuke Yokota}
\author[1]{Yuichi Sudo\thanks{Corresponding author:y-sudou[at]ist.osaka-u.ac.jp}}
\author[1]{Toshimitsu Masuzawa}

\affil[1]{Osaka University, Japan}

\date{}

\maketitle              
\begin{abstract}
%
We propose a self-stabilizing leader election protocol on directed rings in the model of population protocols.
Given an upper bound $N$ on the population size $n$, 
the proposed protocol elects a unique leader
within $O(nN)$ expected steps starting from any configuration
and uses $O(N)$ states. 
This convergence time is optimal if a given upper bound $N$
is asymptotically tight, \ie $N=O(n)$.

\end{abstract}

\section{Introduction}
We consider the \emph{population protocol} (PP) model \cite{AAD+06} in
 this paper.
A network called \emph{population} consists of a large number of
 finite-state automata,
called \emph{agents}.
Agents make \emph{interactions}
 (\ie pairwise communication) with each other
to update their states.
The interactions are opportunistic,
\ie they are unpredictable for the agents.
A population is modeled by a graph $G=(V,E)$,
where $V$ represents the set of agents,
and $E$ indicates which pair of agents can interact.
Each pair of agents $(u,v)\in E$ has interactions
infinitely often,
while each pair of agents $(u',v') \notin E$ never has an interaction.
At each time step,
one pair of agents chosen uniformly at random from all pairs in $E$ has an interaction.
This assumption enables us to evaluate time complexities
of population protocols.\footnote{
Almost all studies in the population protocol model
make this assumption when they evaluate time complexities
of population protocols.
}
In the field of population protocols, many efforts have been devoted to devising protocols for a complete graph,
\ie a population where every pair of agents interacts infinitely often.
In addition, several studies
\cite{AAF+08,AAD+06,BBB13,CP07,CC19,CG17,MNRS14,SOK+14,SMD+16,SOK+20det,SOK+18}
have investigated 
populations 
forming graphs other than complete graphs.

\emph{Self-stabilization} \cite{Dij74}
is a fault-tolerant property whereby,
even when any transient fault
(\eg memory crash) occurs,
the network can autonomously recover from the fault.
Formally, self-stabilization is defined as follows:
(i) starting from an arbitrary configuration,
a network eventually reaches a \emph{safe configuration} (\emph{convergence}),
and
(ii) once a network reaches a safe configuration,
it maintains its specification forever (\emph{closure}).
Self-stabilization is of great importance in the PP model
because self-stabilization tolerates any finite number of transient faults, and this is a necessary property
in a network consisting of a large number of inexpensive and unreliable nodes. 

Consequently,
many studies have been devoted to self-stabilizing population protocols \cite{AAF+08,BBB13,SIW12,CP07,CC19,FJ06,Izu15,SOK+14,SEI+20,SMD+16,SNY+12,SOK+20det,SOK+18,SOK+20polylog}.
For example, Angluin \etal~\cite{AAF+08} proposed self-stabilizing protocols for a variety of problems, \ie leader election in rings, token circulation in rings with a pre-selected leader, 2-hop coloring in degree-bounded graphs, consistent global orientation in undirected rings, and spanning-tree construction in regular graphs.
Sudo \etal~\cite{SMD+16,SOK+18} gave self-stabilizing 2-hop coloring protocol that uses a much smaller memory space of agents. 
Sudo \etal~\cite{SSN+20} investigates the solvability of self-stabilizing leader election, self-stabilizing ranking, self-stabilizing degree recognition, and self-stabilizing neighbor recognition
on arbitrary graphs.


Many of the above studies on self-stabilizing population protocols have focused on self-stabilizing leader election (SS-LE) because
leader election is one of the most fundamental problems in the PP model.\footnote{
Several important protocols \cite{AAF+08,AAD+06,AAE08} require a pre-selected unique leader.
In particular, Angluin \etal~\cite{AAE08}
show that 
all semi-linear predicates can be solved very quickly if we have a unique leader.
}
The goal of the leader election problem
is electing exactly one agent as a leader in the population.
Unfortunately, SS-LE is impossible to solve
without an additional assumption
even if we focus only on
complete graphs \cite{AAF+08,SIW12,SSN+20}. 
The studies to overcome this impossibility in the literature
are roughly classified into four categories.
(Some of the studies belong to multiple categories.)

The first category \cite{Bur+19,SIW12,SSN+20} assumes that every agent knows the exact number of agents.
With this assumption,
Cai \etal~\cite{SIW12} gave an SS-LE protocol for complete graphs, Burman \etal~\cite{Bur+19} gave faster protocols in the same setting,
and Sudo \etal~\cite{SSN+20} gave an SS-LE protocol for arbitrary graphs.

The second category \cite{BBB13,CP07,FJ06} employs
\emph{oracles}, a kind of failure detectors.
Fischer and Jiang \cite{FJ06}
introduced an oracle $\Omega?$ that eventually
tells all agents whether or not at least one leader exists.
They proposed two SS-LE protocols using $\Omega?$,
one for complete graphs and the other for rings.
Canepa \etal~\cite{CP07} proposed two SS-LE protocols that use $\Omega?$, \ie a deterministic protocol for trees
and a randomized protocol for arbitrary graphs.
Beauquier \etal~\cite{BBB13}
presented a deterministic SS-LE protocol
for arbitrary graphs that uses two copies of $\Omega?$.

The third category \cite{Izu15,SOK+14,SEI+20,SMD+16,SNY+12,SOK+20det,SOK+18,SOK+20polylog} slightly relaxes the requirement of self-stabilization and gave \emph{loosely-stabilizing} leader election protocols. Specifically, the studies of this category allow a population to deviate from the specification of the problem (\ie a unique leader) after the population satisfies the specification for an extremely long time. 
This concept was introduced by \cite{SNY+12}.
The protocols given by \cite{Izu15,SEI+20,SNY+12,SOK+20polylog}
work for complete graphs
and those given by \cite{SOK+14,SMD+16,SOK+20det,SOK+18}
work for arbitrary graphs. 
Recently, Sudo \etal~\cite{SEI+20} gave a time-optimal loosely-stabilizing leader election protocol for complete graphs:
given a design parameter $\tau \ge 1$,
an execution of their protocol reaches a safe configuration
within $O(\tau n\log n)$ expected steps starting from any configuration, and thereafter, it keeps a unique leader
for $\Omega(n^{\tau})$ expected steps,
where $n$ is the number of agents in the population.


The forth category \cite{AAF+08,CC19,CC20,FJ06}
restricts the topology of a graph
to avoid the impossibility of SS-LE.
A class $\calG$ of graphs is called \emph{simple}
if there does not exist a graph in $\calG$ which
contains two disjoint subgraphs that are also in $\calG$.
Angluin \etal~\cite{AAF+08} proves that
there exists no SS-LE protocol
that works for all the graphs in any non-simple class.
Thus, if we focus on a simple class of graphs,
there may exist an SS-LE protocol for all graphs in the class.
Typically, the class of \emph{rings} is simple.
Angluin \etal~\cite{AAF+08} gave an SS-LE protocol that works
for all rings whose sizes are not multiples of a given integer $k$ (in particular, rings of odd size).
They posed a question whether SS-LE is solvable or not for general rings (\ie rings of any size) without any oracle or knowledge such as the exact number of agents in the population,
while Fischer and Jiang \cite{FJ06} solves SS-LE for general rings using oracle $\Omega?$. 
This question had been open for a decade until
Chen and Chen \cite{CC19} recently gave an SS-LE protocol
for general rings.
These three protocols given by \cite{AAF+08,CC19,FJ06}
use only a constant number of states per agent.
The expected convergence times (\ie the expected numbers of steps required to elect a unique leader starting from any configuration)
of the protocols proposed by \cite{AAF+08,FJ06} \footnotemark{}
are $\Theta(n^3)$, while the protocol given by \cite{CC19} requires an exponentially long convergence time.
\footnotetext{
The oracle $\Omega?$ only guarantees that
it \emph{eventually} report to each agent whether
there exists a leader in the population. 
Here, the convergence time of the protocol of \cite{FJ06} is 
bounded by $\Theta(n^3)$ assuming that
the oracle immediately reports the absence of the leader
to each agent.
}
All of the three protocols 
assume that the rings are oriented or \emph{directed}.
However, this assumption is not essential
because Angluin \etal~\cite{AAF+08} also presented 
a self-stabilizing ring orientation protocol,
which gave a common sense of direction to all agents in the ring.
In this paper, we also consider directed rings.
Very recently, Chen and Chen \cite{CC20}
generalized their work on the rings
for regular graphs.

\begin{table*}[t]
 \center
 \caption{Self-Stabilizing Leader Election on Rings}
 \label{tbl:results}
 \begin{tabular}[t]{c c c c}
  \hline
  &Assumption&Convergence Time\ \ & \#states\\
  \hline
 \cite{AAF+08} & $n$ is not multiple of a given $k$
      &$\Theta(n^3)$ & $O(1)$\\
 \cite{FJ06} & oracle $\Omega?$ & $\Theta(n^3)$ & $O(1)$\\
 \cite{CC19} & none & exponential & $O(1)$\\
 this work & \ \ $n \le N$ for a given $N$
& $O(nN)$ & $O(N)$\\
  \hline
 \end{tabular}
\end{table*}

\subsection{Our Contribution}
\label{sec:contribution}

This paper is classified to the fourth category. 
We give the first time-optimal SS-LE protocol $\prl$
for directed rings.
Specifically,
given an upper bound $N$ on the population size $n$,
$\prl$ elects a unique leader
in $O(nN)$ expected steps 
for all directed rings (whose size is at most $N$).
No protocol can solve SS-LE
in $o(n^2)$ expected steps.
Thus, $\prl$ is time-optimal
if a given upper bound $N$ is asymptotically tight,
\ie $N=O(n)$. 
The results are summarized in Table \ref{tbl:results}.

The main and remarkable contribution of this paper is
a novel mechanism that
largely improves the number of steps required
to decrease the number of leaders to one
when there are multiple leaders in the population.
The mechanism requires only $O(n^2)$ expected steps,
while the existing three SS-LE protocols
for rings \cite{AAF+08,CC19,FJ06}
requires $\Omega(n^3)$ expected steps
to decrease the number of leaders to one.
%
%
Our mechanism requires only $O(1)$ states,
which is the same as the existing three protocols. 
(Protocol $\prl$ requires $O(N)$ states only to detect the absence of a leader.)
Thus, if we assume an oracle
that reports to each leader an absence of a leader
within $O(n^2)$ expected steps, we immediately obtain
an SS-LE protocol with $O(n^2)$ expected convergence time
and a constant number of agent-states by using the proposed mechanism to remove leaders.
We leave open an interesting question
whether or not this oracle can be implemented
with $o(N)$ states.

\section{Preliminaries}
In this section, we describe the formal definitions of our computation model.

A \emph{population} is a simple and weakly connected digraph $G(V,E)$,
where $V$ ($|V| \ge 2$) is a set of \emph{agents} and 
$E \subseteq V \times V$ is a set of arcs.
Each arc represents a possible \emph{interactions}
(or communication between two agents): If $(u,v) \in E$,
agents $u$ and $v$ can interact with each other,
where
$u$ serves as an \emph{initiator}
and $v$ serves as a \emph{responder}.
If $(u,v) \notin E$,
agents $u$ and $v$ never have an interaction.
In this paper, we consider only a population represented
by an \emph{directed ring},
\ie we assume that $V=\{u_0, u_1, \dots, u_{n-1}\}$
and $E= \{(u_i,u_{i+1 \bmod n}) \mid i = 0,1,\dots,n-1\}$.
Here, we use the indices of the agents only for simplicity of description. The agents are \emph{anonymous},
\ie they do not have unique identifiers. 
We call $u_{i-1 \bmod n}$ and $u_{i+1 \bmod n}$ the left neighbor and the right neighbor of $u_i$, respectively.
We omit ``modulo by $n$''
(\ie $\bmod~n$)
in the index of agents when no confusion occurs. 
 
A \emph{protocol} $P(Q,Y,T,\outputs)$ consists of 
a finite set $Q$ of states,
a finite set $Y$ of output symbols, 
transition function
$T:  Q \times Q \to Q \times Q$,
and output function $\outputs : Q \to Y$.
When an interaction between two agents occurs,
$T$ determines the next states of the two agents
based on their current states.
The \emph{output of an agent} is determined by $\outputs$:
the output of agent $v$ with state $q \in Q$ is $\outputs(q)$.
We assume that all agents have a common knowledge $N$ on $n$ such that $n \le N = O(\poly(n))$.
Thus, the parameter $Q$, $Y$, $T$, and $\outputs$ 
can depend on the knowledge $N$. 
However, for simplicity,
we do not explicitly write protocol $P$
as parameterized with $N$, \eg 
$P_N = (Q_N, Y_N, T_N, \pi_{\mathrm{out}, N})$.

A \emph{configuration} is a mapping $C : V \to Q$ that specifies
the states of all the agents.
We denote 
the set of all configurations of protocol $P$ by $\call(P)$.
We say that configuration $C$ changes to $C'$ by
an interaction $e = (u_i,u_{i+1})$,
denoted by $C \stackrel{e}{\to} C'$
if we have
$(C'(u_i),C'(u_{i+1}))=T(C(u_i),C(u_{i+1}))$
and $C'(v) = C(v)$
for all $v \in V \setminus \{u_i,u_{i+1}\}$.
We simply write $C \to C'$ if
there exits $e \in E$ such that $C \stackrel{e}{\to} C'$.
We say that a configuration $C'$ is reachable from $C$
if there exists a sequence of configurations
$C_0, C_1, \dots, C_k$ such that $C=C_0$, $C'=C_k$,
and $C_i \to C_{i+1}$ for all $i=0,1,\dots,k-1$.
We also say that a set $\calC$ of configurations
is \emph{closed} if
no configuration outside $\calC$ is reachable from 
a configuration in $\calC$.

A scheduler determines which interaction occurs at each time
step (or just \emph{step}).
In this paper, 
we consider a \emph{uniformly random scheduler}
$\sch=\Gamma_0, \Gamma_1,\dots$: 
each $\Gamma_t \in E$ 
is a random variable
such that
$\Pr(\Gamma_t = (u_i,u_{i+1})) = 1/ n$
for any $t \ge 0$ and $i=0,1,\dots,n-1$.
Each $\Gamma_t$ represents the interaction that occurs
at step $t$.
Given an initial configuration $C_0$,
the \emph{execution} of protocol $P$ under $\sch$
is defined as 
$\Xi_{P}(C_0,\sch) = C_0,C_1,\dots$ such that
$C_t \stackrel{\Gamma_t}{\to} C_{t+1}$ for all $t \ge 0$. 
We denote $\Xi_{P}(C_0,\sch)$
simply by $\Xi_{P}(C_0)$ when no confusion occurs.

We address the self-stabilizing leader election problem (SS-LE)
in this paper. For simplicity,
we give the definition of a self-stabilizing leader election protocol instead of giving the definitions of self-stabilization and the leader election problem separately.
\begin{dfn}[Self-stabilizing Leader Election]
\label{def:ss-le} 
For any protocol $P$,
we say that a configuration $C$ of $P$ is \emph{safe}
if (i) exactly one agent outputs $L$
and all other agents output $F$ in $C$,
and
(ii) 
at every configuration reachable from $C$,
all agents keep the same outputs as those in $C$.
A protocol $P$ is a \emph{self-stabilizing leader election protocol} if  $\Xi_{P}(C_0,\sch)$ reaches a safe configuration
with probability $1$.
\end{dfn}

We evaluate a self-stabilizing leader election protocol $P$
with two metrics:
\emph{the expected convergence time}
and \emph{the number of states}.
For a configuration $C \in \call(P)$,
let $t_{P,C}$ be the expected number of steps 
until $\Xi_{P}(C_0,\sch)$ reaches a safe configuration.
The expected convergence time of $P$ is
defined as $\max_{c \in \call(P)} t_{P,C}$.
The number of states of $P=(Q,Y,T,O)$ is simply $|Q|$.

\section{Self-Stabilizing Leader Election Protocol}
In this section, we propose a self-stabilizing leader election (SS-LE) protocol $\prl$ that works in any directed ring consisting of less than or equal to $N$ agents.
The expected convergence time is $O(nN)$,
and the number of states is $O(N)$.
Thus, $\prl$ is time-optimal if
a given upper bound $N$ of $n$ is asymptotically tight, \ie $N=O(n)$.

The pseudocode of $\prl$ is given in Algorithm \ref{al:prl},
which describes how two agents $l$ and $r$ updates their states, \ie their variables, when they have an interaction.
Here, $l$ and $r$ represents the initiator
and the responder in the interaction, respectively.
That is, $l$ is the left neighbor of $r$,
and  $r$ is the right neighbor of $l$.
We denote the value of the variable $\mathtt{var}$
at agent $v\in V$ by $v.\mathtt{var}$.
Similarly, we denote the variable $\mathtt{var}$
in state $q\in Q$ by $q.\mathtt{var}$.
In this algorithm, each agent $v \in V$ maintains
an \emph{output} variable $v.\leader \in \{0,1\}$,
according to which it determines its output.
Agent $v$ outputs $L$ when $v.\leader = 1$
and outputs $F$ when $v.\leader = 0$.
We say that $v$ is a \emph{leader} if $v.\leader = 1$;
otherwise $v$ is called a \emph{follower}.
For each $u_i \in V$,
we define
the distance to \emph{the nearest left leader} 
and the distance to \emph{the nearest right leader}
of $u_i$
as 
$\distl(i) = \min \{j \ge 0 \mid u_{i-j}.\leader = 1\}$
and
$\distr(i) = \min \{j \ge 0 \mid u_{i+j}.\leader = 1\}$,
respectively.
When there is no leader in the ring, 
we define $\distl(i)=\distr(i)=\infty$.

Algorithm $\prl$ consists of two parts:
the leader creation part (Lines 1--5)
and the leader elimination part (Lines 6--19).
Since $\prl$ is a self-stabilizing protocol,
it has to handle any initial configuration,
where there may be no leader
or multiple leaders. 
The leader creation part creates a new leader
when there is no leader,
while the leader elimination part 
decreases the number of leaders
to one when there are two or more leaders. 

\subsection{Leader Elimination}
\label{sec:elimination}
We are inspired by the algorithm of \cite{FJ06} to design the leader elimination part (Lines 6--19).
Roughly speaking,
the strategy of \cite{FJ06} can be described as follows.
\begin{itemize}
 \item Each agent may have a \emph{bullet} and/or a \emph{shield}.
 \item A leader always fire a bullet:
each time a leader $u_{i+1}$ having no bullet
interacts with an agent $u_{i}$,
the leader $u_{i+1}$ makes a bullet.
 \item Bullets move from left to right in the ring:
each time $u_i$ having a bullet interacts with $u_{i+1}$,
the bullet moves from $u_i$ to $u_{i+1}$.
 \item Conversely, shields move from right to left:
each time $u_{i+1}$ having a shield interacts with $u_{i}$,
the shield moves from $u_{i+1}$ to $u_{i}$.
 \item Each time two agents both with bullets (resp.~shields)
have an interaction,
the left bullet (resp.~the right shield) disappears.
 \item When a bullet and a shield pass each other,
\ie $u_i$ with a bullet and $u_{i+1}$ with a shield
have an interaction, the bullet disappears.
 \item When a bullet moves to a leader,
the leader is \emph{killed} (\ie becomes a follower). 
\end{itemize}
The algorithm of \cite{FJ06} assumes an oracle,
called an \emph{eventual leader detector} $\Omega?$,
which
detects and tells each agent whether
a leader exists or not, 
when there is continuously a leader or
there is continuously no leader. 
A follower becomes a leader when it is reported by $\Omega?$
that there is no leader in the population. 
At this time, the new leader simultaneously
generates both a shield and a bullet.
One can easily observe that 
by the above strategy together with oracle $\Omega?$,
the population eventually reaches a configuration
after which there is always one fixed leader.
However, the algorithm of \cite{FJ06}
requires $\Omega(n^3)$ steps to elect one leader
in the worst case
even if oracle $\Omega?$ can immediately report to each agent
whether there is a leader in the population.

\begin{algorithm}[t]
\caption{$\prl$}
\label{al:prl}


\textbf{Interaction} between initiator $l$ and responder $r$:
\begin{algorithmic}[1]
\State $l.\dist \gets \begin{cases}
		 0 & l.\leader = 1\\		 
		 l.\dist  & \text{otherwise}
		\end{cases}$
\State $r.\dist \gets \begin{cases}
		 0 & r.\leader = 1\\
		 \min(l.\dist + 1,N) & r.\leader = r.\bull = 0\\
		 r.\dist  & \text{otherwise}
		\end{cases}$
\If{$r.\dist = N$}
\State $r.\leader \gets 1$; $r.\bull \gets 2$;  $r.\shield \gets 1$; $r.\signal \gets 0$; $r.\dist \gets 0$; 
\EndIf
\vblank

\If{$l.\leader = l.\signal = 1$}
\State $l.\bull \gets 2$; $l.\shield \gets 1$; $l.\signal \gets 0$
\EndIf

\If{$r.\leader = r.\signal = 1$}
\State $r.\bull \gets 1$; $r.\shield \gets 0$; $r.\signal \gets 0$
\EndIf

\If{$l.\bull > 0 \land r.\leader = 1$}
\State $r.\leader \gets \begin{cases}
		     0 & l.\bull = 2 \land r.\shield = 0 \\
		     1 & \text{otherwise}
		    \end{cases}$
\State $l.\bull \gets 0$; 
\ElsIf{$l.\bull > 0 \land r.\leader = 0$}
\State $r.\bull \gets \begin{cases}
		      l.\bull & r.\bull = 0\\		       
		      r.\bull & r.\bull > 0
		      \end{cases}$
\State $l.\bull \gets 0$;
$r.\signal \gets 0$;
\EndIf
\State $l.\signal \gets \max(l.\signal,r.\signal,r.\leader)$
%
%
\end{algorithmic}
\end{algorithm}

We drastically modify the above strategy of \cite{FJ06}
for the leader elimination part of $\prl$
to decrease the number of leaders to one within $O(n^2)$ steps.
First, a shield never moves in our algorithm.
Only leaders have a shield.
A leader sometimes generates a shield
and sometimes breaks a shield.
Second, a leader does not always fire a bullet.
Instead, a leader fires a new bullet 
only after it detects that the last bullet it fired
reaches a (possibly different) leader.
Third, we have two kinds of bullets:
\emph{live bullets} and \emph{dummy bullets}.
A live bullet kills a leader without a shield.
However, a dummy bullet does not have capability
to kill a leader. 
When a leader decides to fire a new bullet,
it fires a live bullet with probability 1/2
and fires a dummy bullet with the rest probability.
When a leader fires a live bullet, 
it simultaneously generates a shield (if it does not have a shield). 
When a leader fires a dummy bullet,
it breaks the shield if it has a shield. 
Thus, roughly speaking,
each leader is \emph{shielded} (\ie has a shield)
with probability $1/2$ at each step.
Therefore, when a live bullet reaches a leader,
the leader is killed with probability $1/2$.
This strategy is well designed: not all leaders kill each other
simultaneously because a leader must be shielded if
it fired a live bullet in the last shot. 
As a result, the number of leaders eventually decreases to one.

In what follows, we explain how we implement this strategy.
Each agent $v$ maintains variables $v.\bull \in \{0,1,2\}$, $v.\shield \in \{0,1\}$, and $v.\signal \in \{0,1\}$.
As their names imply, $v.\bull=0$ (resp.~$v.\bull=1$, $v.\bull=2$)
indicates that $v$ is now conveying no bullet (resp.~a dummy bullet, a live bullet), while $v.\shield=1$ indicates that $v$ is shielded.
Unlike the protocol of \cite{FJ06},
we ignore the value of $v.\shield$ for any follower $v$.
A variable $\signal$ is used by a leader to detect
that the last bullet it fired already disappeared.
Specifically, $v.\signal=1$ indicates that 
$v$ is propagating a \emph{bullet-absence signal}.
A leader always generates a bullet-absence signal
in its left neighbor when it interacts with its
left neighbor (Line 19).
This signal propagates from right to left (Line 19),
while a bullet moves from left to right (Lines 16-17).
A bullet disables a bullet-absence signal
regardless of whether it is live or dummy,
\ie $u_{i+1}.\signal$ is reset to $0$
when two agents $u_i$ and $u_{i+1}$ such that $u_i.\bull > 0$
and $u_{i+1}.\signal=1$ have an interaction (Lines 16 and 17).
Thus, a bullet-absence signal propagates to a leader only after the last bullet fired by the leader disappears. 

When a leader $u_i$ receives a bullet-absence signal from its right neighbor $u_{i+1}$, $u_i$ waits for its next interaction
to extract randomness from the uniformly random scheduler.
At the next interaction, by the definition of the uniformly random scheduler, $u_{i}$ meets its right neighbor $u_{i+1}$ with probability $1/2$ and its left neighbor $u_{i-1}$ with probability $1/2$. In the former case, $u_i$ fires a live bullet and becomes shielded (Lines 6--8). In the latter case, $u_i$ fires a dummy bullet
and becomes unshielded (Lines 9--11).
In both cases, the received signal is deleted (Lines 7 and 10).
The fired bullet moves from left to right each time
the agent with the bullet, say $u_i$, interact
with its right neighbor $u_{i+1}$ (Lines 16 and 17).
However, the bullet disappears without moving to $u_{i+1}$
if $u_{i+1}$ already has another bullet at this time.
Suppose that the bullet now reaches a leader.
If the bullet is live and the leader is not shielded
at that time, the leader is killed by the bullet (Line 13).
The bullet disappears at this time regardless of whether the bullet is alive and/or the leader is shielded (Line 14). 

\subsection{Leader Creation}
\label{sec:creation}

The leader creation part is simple  (Lines 1--5).
Each agent $u_i$ estimates $\distl(i)$
and stores the estimated value on variable $u_i.\dist \in \{0,1,\dots,N\}$.
Specifically,
at each interaction $(u_i,u_{i+1})$, 
agents $u_i$ and $u_{i+1}$ updates their $\dist$ as follows
(Lines 1 and 2):
(i) $u_i$ (resp.~$u_{i+1}$) resets its $\dist$ to zero
if $u_i$ (resp.~$u_{i+1}$) is a leader,
and 
(ii) if $u_{i+1}$ is not a leader and
does not have a bullet,
$\min(l.\dist+1,N)$ is substituted for $u_{i+1}.\dist$.
Thus, if there is no leader in the population,
some agent $v$ eventually increases $v.\dist$ to $N$,
and at that time, the agent decides that there is no leader.
Then, this agent becomes a leader, 
executing $v.\leader \gets 1$ and $v.\dist \gets 0$ (Line 4).
At the same time, 
$v$ fires a live bullet, generates a shield,
and disables a bullet-absence signal (Line 4).
This live bullet prevents the new leader from being killed
for a while:
the leader becomes unshielded only after it receives a bullet-absence signal, and the live bullet prevent the leader receives
a bullet-absence signal before another shielded leader appears.

As mentioned above, at interaction $(u_i,u_{i+1})$,
the distance propagation does not occur if
$u_{i+1}$ is a leader. 
This exception helps us to simplify the analysis of the convergence time, \ie we can easily get an upper bound on the expected number of steps before each bullet disappears.
Note that
there are two cases that a bullet disappears:
(i) when it reaches a leader,
and 
(ii) when it reaches another bullet.
The first case includes an interaction $(u_i,u_{i+1})$
where $u_{i}.\dist \ge N-1$ holds
and $u_{i+1}$ becomes a leader by Line 4.
Formally,
at an interaction $(u_i,u_{i+1})$ such that $u_i.\bull \ge 1$,
we say that a bullet located at $u_i$
disappears if $u_{i+1}.\leader = 1$, $u_{i+1}.\bull \ge 1$,
or $u_i.\dist \ge N-1$. 
We have the following lemma thanks to the above exception.

\begin{lemma}
\label{lem:disappear} 
Every bullet disappears before it moves (from left to right)
$N$ times.
\end{lemma}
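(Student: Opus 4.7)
The plan is to track, for a given bullet $B$, the quantity $d = u_j.\dist$ where $u_j$ is the agent currently holding $B$, and show that every successful move of $B$ strictly increases $d$ by $1$, while any attempted move with $d + 1 \ge N$ erases $B$ rather than advancing it. This immediately yields the $< N$ bound on moves.

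The first step is an invariant: while $B$ resides at a follower $u_j$, the value $u_j.\dist$ is frozen. As initiator, Line 1 updates $l.\dist$ only if $l.\leader = 1$; as responder, Line 2 updates $r.\dist$ only if $r.\leader = 1$ or $r.\bull = 0$, both of which fail at a follower carrying $B$. If $B$ instead sits at a leader, Line 1 (or the first branch of Line 2) resets $u_j.\dist$ to $0$ in the very interaction in which $B$ could move, so the relevant pre-move $\dist$ is $0$. Either way, the $\dist$ just before $B$'s first attempted move is some $d_0^\star \ge 0$.

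The second step reads off the move rule from Lines 2--5 and 12--18. A move from $u_j$ to $u_{j+1}$ requires $u_{j+1}.\leader = 0$ and $u_{j+1}.\bull = 0$; Line 2 then writes $u_{j+1}.\dist \gets \min(u_j.\dist + 1, N)$ before Lines 16--17 transfer $\bull$. If $u_j.\dist + 1 \ge N$, Line 4 fires first, making $u_{j+1}$ a leader with a fresh shield, after which Lines 12--14 activate with $r.\shield = 1$, erasing $B$ at $u_j$ without any transfer. Hence every successful move leaves $B$ at a new position whose $\dist$ equals the pre-move value $+\,1$; by Step~1 applied to this new follower position, that value is frozen until $B$'s next move.

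Combining, after $k$ successful moves $B$'s position has $\dist = d_0^\star + k$, and the $(k+1)$-th move occurs only if $d_0^\star + k + 1 < N$. So the total number of moves is at most $N - 1 - d_0^\star \le N - 1 < N$, which is the lemma. I expect the main subtlety to be the initial-configuration case in which $B$ already sits at a leader with an arbitrary $u_j.\dist$; the leader-reset of Step~1 handles this cleanly because it fires in the same interaction that could move $B$, effectively forcing $d_0^\star = 0$ in that case.
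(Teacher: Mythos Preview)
Your proof is correct and is precisely the argument the paper intends. The paper does not give an explicit proof of this lemma; it merely states it immediately after pointing out the design choice that $u_{i+1}.\dist$ is \emph{not} updated when $u_{i+1}$ already carries a bullet (the ``otherwise'' branch of Line~2), remarking ``We have the following lemma thanks to the above exception.'' Your Step~1 is exactly that exception spelled out, and Step~2 is the companion observation that a successful move writes $\min(l.\dist+1,N)$ into the new carrier before the bullet transfers. The paper's formal definition of ``disappears'' (just above the lemma) lists the three terminating conditions---$u_{i+1}.\leader=1$, $u_{i+1}.\bull\ge 1$, or $u_i.\dist\ge N-1$---and your case analysis of Lines~2--5 and 12--18 recovers exactly these. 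In short, you have supplied the proof the paper leaves to the reader, by the same route.
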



We should note that the leader creation part may create a leader even when there is one or more leaders, thus this part may prevent
the leader elimination part from decreasing the number of leaders
to one.
Fortunately, as we will see in Section \ref{sec:correctness},
within $O(nN)$ steps in expectation,
the population reaches a configuration after which
no new leader is created.

\section{Correctness and Time Complexity}
\label{sec:correctness}
In this section, we prove
that $\prl$ is a self-stabilizing leader election protocol
on directed rings of any size
and that the expected convergence time of $\prl$
is $O(nN)$.
In Section \ref{sec:srl},
we define a set $\srl$ of configurations
and prove that every configuration in
$\srl$ is safe.
In Section \ref{sec:convergence},
we prove that
the population starting from any configuration reaches 
a configuration in $\srl$ within $O(nN)$ steps in expectation.


\subsection{Safe Configurations}
\label{sec:srl}


In this paper,
we use several functions whose return values
depend on a configuration,
such as $\distl(i)$ and $\distr(i)$.
When a configuration should be specified,
we explicitly write a configuration as the first argument
of those functions.
For example, 
we write $\distl(C,i)$ and $\distr(C,i)$
to denote $\distl(i)$ and $\distr(i)$
in a configuration $C$, respectively.

In protocol $\prl$, leaders kill each other by firing live bullets
to decrease the number of leaders to one.
However, it is undesirable that
all leaders are killed and the number of leaders becomes zero. 
Therefore, a live bullet should not kill a leader if it is the last leader (\ie the unique leader) in the population.
We say that a live bullet located at agent $u_i$ is \emph{peaceful} when the following predicate holds:
\begin{align*}
\peaceful(i) &\predef
\left(
\begin{aligned}
&u_{i-\distl(i)}.\shield = 1 \\
& \land \forall j = 0,1,\dots, \distl(i): u_{i-j}.\signal = 0
\end{aligned}
\right).
\end{align*}
A peaceful bullet never kills the last leader in the population
because its nearest left leader is shielded.
A peaceful bullet never becomes non-peaceful;
because letting $u_i$ be the agent at which the bullet is located,
the agents
$u_{i-\distl(i)}, u_{i-\distl(i)+1},\allowbreak \dots,u_{i}$
will never have a bullet-absence signal
thus $u_{i-\distl(i)}$ never becomes unshielded
before the bullet disappears.
At the beginning of an execution, 
there may be one or more non-peaceful live bullets. 
However, every newly-fired live bullet is peaceful because
a leader becomes shielded and disables  the bullet-absence signal
when it fires a live bullet. 
Thus, once the population reaches a configuration where
every live bullet is peaceful and there is one or more leaders,
the number of leaders never becomes zero.
Formally, we define the set of such configurations
as follows:
\begin{align*}
 \cpb &= \left \{ C \in \call(\prl) \longmid
\begin{aligned}
&\exists u_i \in V: C(u_i).\leader = 1\\
&\land \forall u_j \in V: 
C(u_j).\bull = 2 \Rightarrow \peaceful(C,j) 
\end{aligned}
\right  \}.
\end{align*}
The following lemma holds from the above discussion.
\begin{lemma}
\label{lem:cpb_closed} 
$\cpb$ is closed.
\end{lemma}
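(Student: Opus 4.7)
The plan is to show that, for any $C \in \cpb$ and any interaction $e = (u_i, u_{i+1})$ with $C \stackrel{e}{\to} C'$, both defining clauses of $\cpb$ survive in $C'$: at least one leader remains in $C'$, and every live bullet in $C'$ is peaceful.

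For the first clause, the only line that turns a leader into a follower is Line 13, which requires $l.\bull = 2$ at the moment of execution. The live bullet at $l$ at that moment was either inherited from $C$, in which case peacefulness in $C$ supplies a shielded leader at $u_{i-\distl(C,i)}$, or it was just minted by Line 7, in which case $l$ itself is a shielded leader. One checks that the pathological cyclic-wrap situation $u_{i-\distl(C,i)} = r$ is ruled out by peacefulness (it would force $r.\shield = 1$ and $r.\signal = 0$, contradicting the hypotheses needed for Line 13 to fire), so the witnessed leader sits at an index different from $i+1$, is untouched by $e$, and persists into $C'$.

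For the peacefulness clause I would classify the live bullets of $C'$ by origin. A bullet born by Line 4 sits at a freshly promoted $r$ with $r.\shield = 1$ and $r.\signal = 0$, and its path is just $\{r\}$; no subsequent line disturbs those fields. A bullet minted by Line 7 sits momentarily at $l$ with $l.\leader = l.\shield = 1$ and $l.\signal = 0$, and is processed immediately by Lines 12--17: either Line 14 annihilates it or Line 16 forwards it to $r$, in which case the path $\{l, r\}$ is peaceful because Line 17 clears $r.\signal$ and Line 19, reading $r.\leader = r.\signal = 0$, leaves $l.\signal$ at $0$. For a surviving old live bullet $B$, peacefulness in $C'$ hinges on two ingredients: the shield of its nearest left leader and the absence of signals along its path. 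The only line that can clear a shield is Line 10, which demands $r.\leader = r.\signal = 1$, but peacefulness of $B$ in $C$ forces $r.\signal = 0$ whenever $r$ is its nearest left leader, so Line 10 cannot strip the leader of any peaceful bullet. The only line that can raise a signal is Line 19, which writes into $l.\signal$.

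The technical heart I would push through is the cyclic-geometry claim that whenever Line 19 actually lifts $l.\signal$ to $1$, no live bullet of $C'$ has $l$ on its path: the trigger ($r.\leader = 1$ or $r.\signal = 1$ just before Line 19) forces any bullet whose path contains $l$ to be located at $l$ itself (since otherwise the arc from its nearest left leader to the bullet would contain $r$, making $r$ a closer leader or a signal-bearing interior agent), and a bullet at $l$ has already been consumed or shifted to $r$ by Lines 12--17 during this very step. The incidental reshufflings of nearest-left-leaders caused by Line 4 (promoting $r$) or Line 13 (killing $r$) are benign: Line 4 installs exactly $r.\shield = 1$ and $r.\signal = 0$, the values a new nearest-left-leader needs, and Line 13 only demotes leaders whose shields had just been cleared, which by the same peacefulness argument cannot have been the nearest left leader of any peaceful bullet. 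I expect this signal-propagation step to be the principal obstacle; everything else reduces to inspecting how each pseudocode line rewrites the relevant fields.
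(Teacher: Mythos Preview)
Your proposal is correct and follows the same approach as the paper: establish that every peaceful live bullet stays peaceful (its nearest left leader keeps its shield because Line~10 cannot fire on it, and no signal is injected onto its path because Line~19's trigger would contradict peacefulness or be cleared by Lines~16--17), that every freshly created live bullet is born peaceful, and that therefore at least one shielded leader survives every step. The paper's own proof is literally the sentence ``The following lemma holds from the above discussion,'' pointing back to the informal paragraph before the lemma, so your careful line-by-line case analysis---including the cyclic wrap-around and the interaction of Lines~4 and~13 with existing paths---is precisely the verification the paper leaves implicit.
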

\noindent
Thus, once the population reaches a configuration in $\cpb$,
there is always one or more leaders.

In protocol $\prl$,
a new leader is created when $\dist$ of some agent
reaches $N$.
We require this mechanism to create a new leader
when there is no leader. 
However, it is undesirable that
a new leader is created when there is already one or more leaders.
We say that an agent $u_i$ is \emph{secure}
when the following predicate holds:
\begin{align*}
\secure(i) &\predef
\begin{cases}
u_i.\dist = 0 & u_i.\leader = 1\\ 
u_i.\dist \le N - \distr(i) & \text{otherwise}
\end{cases}.
\end{align*}
One may think that no leader is created 
once the population reaches a configuration in $\cpb$
such that all agents are secure.
Unfortunately, this does not hold.
For example, consider the case $n=N=100$
and a configuration $C \in \cpb$
where
\begin{itemize}
 \item only two agents $u_{0}$ and $u_{50}$ are leaders,
 \item $u_{0}.\dist = u_{50}.\dist = 0$, 
 \item $u_{i}.\dist = 100-\distr(i)$ for all
$i = 1,2,\dots,49,51,52,\dots,100$, 
 \item $u_{49}$ carries a live bullet in $C$,
\ie $u_{49}.\bull = 2$, and
 \item $u_{50}$ is not shielded, \ie $u_{50}.\shield = 0$.
\end{itemize}
Note that the above condition does not contradict
the assumption $C \in \cpb$.
In this configuration, all agents are secure. 
However, starting from this configuration,
the population may create a new leader
even when another leader exists.
In configuration $C$, $u_{49}.\dist = 99$.
If $u_{49}$ and $u_{50}$ have two interactions in a row, 
then $u_{50}$ becomes a follower in the first interaction,
and $u_{49}.\dist + 1 = 100$ is substituted for $u_{50}.\dist$
and $u_{50}$ becomes a leader again in the second interaction
(even though $u_0$ is a leader during this period).

We introduce the definition of \emph{modest bullets}
to clarify the condition by which
a new leader is no longer created.
A live bullet located at $u_i$ is said to be \emph{modest}
when the following predicate holds:
\begin{align*}
\modest(i) &\predef
\peaceful(i) \land
 \forall j=0,1,\dots,\distl(i): u_{i-j}.\dist \le \distl(i-j). 
\end{align*}
As we will see soon,
a new leader is no longer created
in an execution starting from a configuration in $\cpb$ where
all agents are secure and 
all live bullets are modest.
Note that 
in the above example, 
a live bullet located at $u_{49}$ in $C$
is not modest. 
We define a set $\cni$ of configurations as follows:
\begin{align*}
\cni &= \left \{ C \in \cpb \longmid
\begin{aligned}
&\forall u_i \in V: \secure(C,i) \\
&\land \forall u_j \in V: 
C(u_j).\bull = 2 \Rightarrow \modest(C,j)
\end{aligned}
\right \}.
\end{align*}


\begin{lemma}
\label{lem:modest} 
A modest bullet never becomes non-modest.
\end{lemma}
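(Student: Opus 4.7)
The plan is to split $\modest(i)$ into its two conjuncts---peacefulness and the distance bound $u_{i-j}.\dist \le \distl(i-j)$ for $j=0,1,\dots,\distl(i)$---and argue each is invariant. Peacefulness has already been shown to be preserved in the paragraph immediately preceding the lemma statement, so I focus on the distance bound. Let $u_i$ be the location of the modest live bullet at the start of some interaction, and let $L = u_{i-\distl(i)}$ be its nearest left leader.

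First I would establish two preliminary facts that eliminate most pathologies. Since at least $L$ is a leader we have $\distl(i) \le n-1 < N$, and then the modesty hypothesis gives $u_{i-j}.\dist \le n-1 < N$ for every $j \in [0, \distl(i)]$. Consequently Lines 3--5 cannot create a new leader at any agent in $[L, u_i]$ during the interaction, and Line 13 cannot kill $L$ because $L.\shield=1$ by peacefulness. Hence the set of leaders inside $[L, u_i]$, and therefore each value $\distl(i-j)$ for $j\in[0,\distl(i)]$, is unchanged by the interaction.

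The rest is a case analysis on the interacting pair $(u_a, u_{a+1})$. If neither endpoint lies in $[L, u_i]$, the range is untouched. If the interaction is $(u_{L-1}, L)$, only Line 2 acts inside the range and sets $L.\dist \gets 0$, which trivially obeys the bound. If both endpoints lie in $[L, u_i]$ with $u_{a+1}$ a follower without a bullet, Line 2 updates $u_{a+1}.\dist$ to $\min(u_a.\dist + 1, N)$; combining the hypothesis $u_a.\dist \le \distl(a)$ with the identity $\distl(a+1) = \distl(a) + 1$ (which holds because $u_{a+1}$ is not a leader) yields $u_{a+1}.\dist \le \distl(a+1)$, as required. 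In the remaining subcases of an in-range interaction, $u_{a+1}.\dist$ is unchanged and the bound is inherited.

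The only subtle case, and the one I expect to be the main obstacle, is the interaction $(u_i, u_{i+1})$ in which the bullet actually propagates---namely, when $u_{i+1}.\leader = 0$ and $u_{i+1}.\bull = 0$. The bullet moves to $u_{i+1}$, enlarging the range to $[L, u_{i+1}]$ with $\distl'(i+1) = \distl(i) + 1$. For $j' \ge 1$ the new bound is inherited from old modesty since only $u_i.\dist$ is touched and Line 1 can only decrease it. For the fresh case $j' = 0$ we need $u_{i+1}.\dist \le \distl(i) + 1$ via the update $u_{i+1}.\dist \gets \min(u_i.\dist + 1, N)$; the crux is ruling out the truncation at $N$. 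This follows because $u_{i+1}.\leader = 0$ precludes $u_{i+1} = L$ and thus forces $\distl(i) \le n - 2$, giving $u_i.\dist + 1 \le n - 1 < N$ and completing the bound.
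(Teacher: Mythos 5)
Your proof is correct and follows essentially the same route as the paper's: peacefulness is carried over from the preceding paragraph, the nearest left leader of the bullet is shown not to change (no creation inside the segment because the modesty bounds keep every relevant $\dist$ below $N$, and the shielded leader $u_{i-\distl(i)}$ cannot be killed), the in-range bounds are inherited under Line~2, and the propagation step is settled by $u_{i+1}.\dist \le u_i.\dist + 1 \le \distl(i)+1$ --- the paper merely packages the same argument as a proof by contradiction. One tiny polish: in your preliminary claim the uniform bound $u_{i-j}.\dist \le n-1$ by itself does not exclude the value $N-1$ when $n=N$; you need the per-agent bound $u_a.\dist \le \distl(a) \le \distl(i)-1 \le n-2$ for the left partner of any in-range follower (the only in-range agent that can have $\dist = n-1$ has the leader itself as right neighbor, where Line~2 resets $\dist$ to $0$), which is exactly the refinement you already spell out for the $j'=0$ case.
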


\begin{proof}
Let 
$C$ and $C'$ be any configurations
such that $C \to C'$.
and $b$ be a modest bullet in $C$.
Assume for contradiction that 
$b$ does not disappear
and $b$ becomes non-modest in $C \to C'$.
Let $u_i$ and $u_{i'}$
be the agents at which $b$ is located in $C$ and $C'$,
respectively ($i' \in \{i,i+1\}$).
Since a peaceful bullet never becomes non-peaceful,
$b$ is still peaceful in $C'$.  
By definition of a modest bullet, in $C$,
agent $u_{i-j}$ satisfies $u_{i-j}.\dist \le \distl(i-j)$ 
for all $j=0,1,\dots,\distl(C,i)$.
Thus, none of $u_{i},u_{i-1},\dots,u_{i-\distl(C,i)+1}$
becomes a leader in $C \to C'$,
Moreover,
$u_{i-\distl(C,i)}$ is shielded in $C$
and thus never becomes a follower in $C \to C'$.
This yields that the nearest left leader of $b$ 
does not change in $C \to C'$,
\ie $i-\distl(C,i)=i'-\distl(C',i')$.
Therefore, for all $j=0,1,\dots,\distl(C,i)$,
$u_{i-j}$ still satisfies $u_{i-j}.\dist \le \distl(i-j)$ in $C'$.
Since $b$ is not modest in $C'$,
$b$ must move $u_i$ to $u_{i+1}$ in $C \to C'$,
and $u_{i+1}.\dist > \distl(C',i+1)$ must hold in $C'$.
However, in $C \to C'$,
$u_{i+1}.\dist$ is updated to
$u_{i}.\dist + 1 \le \distl(C,i) +1 = \distl(C',i+1)$,
a contradiction.
\end{proof}

\begin{lemma}
\label{lem:newbullet} 
A newly-fired live bullet is modest.
\end{lemma}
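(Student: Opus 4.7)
The first step is to identify every point in Algorithm \ref{al:prl} at which a live bullet is newly fired, \ie $\bull$ is set to $2$. Inspection of the pseudocode shows exactly two such points: Line 4, where a new leader is created because $r.\dist = N$, and Line 7, where an existing leader $l$ holding a bullet-absence signal fires a live bullet. So it suffices to verify $\modest$ in the configuration reached at the end of an interaction that executes one of these two lines (and in which the newly-fired bullet has not already disappeared).

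Consider the Line 4 case. Immediately after Line 4, $r.\leader = 1$, $r.\bull = 2$, $r.\shield = 1$, $r.\signal = 0$, and $r.\dist = 0$. Lines 6--11 do not affect $r$ (since $r.\signal = 0$); Lines 12--14 may trigger if $l.\bull > 0$, but $r.\shield = 1$ makes Line 13 leave $r.\leader = 1$, and Line 14 only zeroes $l.\bull$. Hence the newly-fired bullet survives at $r$, which is still a leader, so $\distl(r) = 0$. The $\modest$ predicate then collapses to $r.\shield = 1$, $r.\signal = 0$, $r.\dist = 0$, all of which hold.

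For the Line 7 case, Line 1 has already set $l.\dist = 0$, and Line 7 sets $l.\bull = 2$, $l.\shield = 1$, $l.\signal = 0$ without altering $l.\leader$. The block of Lines 12--18 must then execute because $l.\bull > 0$. If Lines 12--14 fire, the bullet is cleared at Line 14; if Lines 15--17 fire with old $r.\bull > 0$, Line 16 is a no-op and Line 17 clears the bullet. In either subcase the bullet has disappeared, and the claim is vacuous. The surviving subcase is Lines 15--17 with $r.\leader = 0$ and old $r.\bull = 0$: then Line 2 has set $r.\dist = \min(l.\dist + 1, N) = 1$, Line 16 moves the bullet to $r$, and Line 17 clears $r.\signal$. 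Since $l.\leader$ stays $1$ throughout, $\distl(r) = 1$ in the final configuration. $\peaceful$ holds because $l.\shield = 1$ (Line 7), $r.\signal = 0$ (Line 17), and $l.\signal = 0$ (Line 7, reconfirmed by Line 19 whose formula evaluates to $\max(0,0,0) = 0$). The inequalities in $\modest$ become $r.\dist = 1 \le \distl(r) = 1$ and $l.\dist = 0 \le \distl(l) = 0$, so $\modest$ holds at $r$.

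The main obstacle is the sequential semantics of a single interaction: distance update, leader creation, bullet firing, bullet movement, and signal propagation can all take effect within one execution, and $\modest$ must be evaluated on the \emph{final} configuration. The subtle point is ensuring that no later line re-enables a signal on one of the agents between the bullet and its nearest left leader; this is controlled because in the two surviving subcases the nearest left leader is at distance $0$ or $1$, so only $l$ and $r$ need inspection, and Line 19's update formula is pinned to $0$ by the values just written at Line 4 or Line 7.
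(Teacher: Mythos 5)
Your proof is correct and follows essentially the same approach as the paper's: identify the points where a live bullet is fired (Line 4 and Line 7), and verify $\modest$ directly on the configuration at the end of that interaction. You are in fact somewhat more careful than the paper, which treats only the leader-fires-at-$u_{i+1}$ scenario explicitly and asserts $u_{i+1}.\dist = 0$ where the correct value in the surviving subcase is $u_{i+1}.\dist = 1 \le \distl(i+1)$, exactly as you compute; your explicit handling of the Line 4 case and of Line 19's signal update closes these small gaps without changing the argument.
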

\begin{proof}
Assume that a leader $u_i$
fires a live bullet $b$ at interaction $(u_i,u_{i+1})$
in $C \to C'$.
Bullet $b$ immediately disappears by Lines 14 if $u_{i+1}$
is a leader or has a bullet in $C$.
Otherwise, $b$ moves to $u_{i+1}$.
Then, $u_i.\shield = 1$,
$u_i.\dist = 0$, $u_{i+1}.\dist=0$,
and $u_i.\signal = u_{i+1}.\signal = 0$
must hold in $C'$,
which yields that $b$ is modest in $C'$.
\end{proof}

\begin{lemma}
\label{lem:modest_secure} 
Let $C$ be any configuration where all live bullets are modest
and $C'$ any configuration such that $C \to C'$.
Then, 
a secure agent $u_i$ becomes insecure in $C \to C'$ only if
$u_i$ interacts with $u_{i-1}$ in $C \to C'$
and $u_{i-1}$ is insecure in $C$.
\end{lemma}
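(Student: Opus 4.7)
The plan is to proceed by case analysis on the interaction $e = (l, r)$ driving $C \to C'$. Two structural observations simplify the bookkeeping. First, inspecting Algorithm \ref{al:prl}, only the responder's $\leader$ variable is ever assigned (Line 4 and Line 13 touch only $r$), so $u_i.\leader$ can change only when $e = (u_{i-1}, u_i)$. Second, $u_i.\dist$ can only grow when $u_i$ is the responder; otherwise $u_i.\dist$ is either unchanged or reset to $0$ by Line 1. This reduces the problem to two cases: (A) $e \ne (u_{i-1}, u_i)$, where $u_i$ must stay secure unconditionally, and (B) $e = (u_{i-1}, u_i)$ with $u_{i-1}$ secure in $C$, where $u_i$ must also stay secure.

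In Case A, if $u_i$ is a leader then $u_i.\dist$ is already $0$ or is reset to $0$, so $u_i$ is directly secure; I therefore assume $u_i$ is a follower. Then $u_i.\dist$ is unchanged, and security can be lost only via $\distr(C', i) > \distr(C, i)$, which forces the nearest right leader $u_{i + \distr(C, i)}$ of $u_i$ to be killed by Line 13. Such a kill requires a live bullet $b$ at $u_{i + \distr(C, i) - 1}$ together with $u_{i + \distr(C, i)}.\shield = 0$ in $C$. By hypothesis, $b$ is modest and hence peaceful. Peacefulness says that the nearest left leader of $b$ is shielded; if $u_{i + \distr(C, i)}$ were the ring's unique leader it would coincide with that nearest left leader (wrapping around) and hence satisfy $\shield = 1$, contradicting the kill. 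So another leader exists, and in particular $u_i$'s own nearest left leader is distinct from the killed one and survives into $C'$. The distance clause of modesty, applied at $j = \distr(C, i) - 1$ (permissible because $u_i, u_{i+1}, \dots, u_{i + \distr(C, i) - 1}$ are all followers), yields $u_i.\dist \le \distl(C, i)$. A ring walk through $u_i$ from its nearest left leader to its nearest right leader in $C'$ visits at most $n \le N$ distinct agents, so $\distl(C, i) + \distr(C', i) \le N$, and chaining gives $u_i.\dist \le N - \distr(C', i)$.

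For Case B, I split on the leader status of $u_i$. If $u_i.\leader = 1$ in $C'$, then either $u_i$ was already a leader in $C$ and Line 2 sets $u_i.\dist \gets 0$, or $u_i$ becomes a leader via Line 4 and Line 5 sets $u_i.\dist \gets 0$; in both situations $u_i$ is secure. If $u_i.\leader$ drops from $1$ to $0$ (killed by Line 13), Line 2 has already set $u_i.\dist \gets 0$, so $u_i$ is again secure. Otherwise $u_i$ is a follower in both $C$ and $C'$, and $\distr(C', i) = \distr(C, i)$ because no leader status strictly right of $u_i$ changes. If $u_i.\bull > 0$, Line 2 leaves $u_i.\dist$ unchanged and security transfers from $C$. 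Otherwise Line 2 updates $u_i.\dist \gets \min(u_{i-1}.\dist + 1, N)$; if $u_{i-1}$ is a leader this yields $u_i.\dist = 1 \le N - \distr(C, i)$ since $\distr(C, i) \le n - 1$, and if $u_{i-1}$ is a follower the assumed security of $u_{i-1}$ combined with $\distr(C, i-1) = \distr(C, i) + 1$ gives $u_i.\dist \le N - \distr(C, i)$.

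The main obstacle is Case A, where both conjuncts of modesty must act in concert: peacefulness to rule out the killed leader being the ring's unique leader, and the distance bound to control $u_i.\dist$ by $\distl(C, i)$. The bridging ring inequality $\distl(C, i) + \distr(C', i) \le n$ further relies on observing that this step cannot create a new leader closer to $u_i$ on the left, since Line 4 fires only on the responder, which in Case A never lies to the left of $u_i$.
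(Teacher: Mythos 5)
Your argument is essentially the paper's proof recast in contrapositive form: you case-split on which interaction occurs and show security is preserved, using the same three ingredients the paper uses --- $\dist$ can grow only at the responder, a secure left neighbor bounds the new value through the $\min(l.\dist+1,N)$ rule, and when the nearest right leader is killed, modesty bounds $u_i.\dist$ by $\distl(C,i)$ while peacefulness guarantees that $u_i$'s nearest left leader is distinct from the killed leader and survives, giving $\distl(C,i)+\distr(C',i)\le n\le N$. Case A and the follower subcases of Case B are sound.

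The genuine gap is the subcase of Case B in which $u_i$ itself is the leader that gets killed. You conclude that since Line 2 has set $u_i.\dist=0$, ``$u_i$ is again secure'', but for a follower $\secure(i)$ reads $u_i.\dist \le N-\distr(i)$ and the paper defines $\distr(i)=\infty$ when no leader exists, so $\dist=0$ alone does not yield security: you must show some leader survives into $C'$. Moreover, in exactly this subcase the killing live bullet need not exist in $C$ at all --- if $u_{i-1}$ is a leader with $\signal=1$, it fires the live bullet at Lines 6--8 of the very same interaction and kills $u_i$ at Line 13 --- so the hypothesis ``all live bullets in $C$ are modest'' cannot be invoked for that bullet. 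The repair is short: if the bullet is fired within the interaction, then $u_{i-1}$ is a leader and the initiator's $\leader$ bit is never modified, so a leader survives and $\distr(C',i)\le n-1\le N$; if instead the bullet already sits at $u_{i-1}$ in $C$, peacefulness --- including its signal clause, which is what prevents $u_i$ from being unshielded by Lines 9--11 during that same interaction --- rules out $u_i$ being the unique leader, so again a leader survives. (The same ``fired or unshielded within the interaction'' mechanism is latent in Case A, but there it is harmless because the left neighbor of the killed leader is a follower, hence cannot execute Lines 6--8, and the signal clause of peacefulness blocks Lines 9--11; your write-up silently assumes the bullet and the absent shield are already present in $C$, so this is worth one explicit sentence.)
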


\begin{proof}
Assume that $u_i$ becomes insecure in $C \to C'$.
First, consider the case that
no leader becomes a follower in $C \to C'$.
Then, $u_i$ must change the value of $u_i.\dist$
to become insecure.
If $u_i$ interacts with $u_{i+1}$, 
$u_i.\dist$ does not change or becomes zero.
Thus, $u_i$ must interact with $u_{i-1}$
and increase $u_{i}.\dist$ to
a value greater than $N-\distr(C',i) \ge N-\distr(C,i-1)+1$,
hence $u_{i-1}$ must be insecure in $C$.
Next, consider the case that
a leader $u_j$ becomes a follower in $C \to C'$.
If $u_{j} \neq u_{i + \distr(C,i)}$, 
$\distr(C,i)=\distr(C',i)$ 
and $C(u_i).\dist = C'(u_i).\dist$ hold,
which violates the assumption that $u_i$ becomes insecure.
Thus, $u_j = u_{i + \distr(C,i)}$ holds.
However, this means that
a modest bullet is located at
$u_{i}, u_{i+1}, \dots, u_{j-1}$ in $C$.
Thus, by definition of a modest bullet, 
$C'(u_i).\dist \le C(u_i).\dist \le \distl(C,i) \le N-\distr(C',i)$ holds. Hence, $u_i$ is still secure in $C'$,
which violates the assumption.
To conclude, 
$u_i$ must interact with $u_{i-1}$ in $C \to C'$
and $u_{i-1}$ must be insecure in $C$.
\end{proof}

\begin{lemma}
\label{lem:cni_closed} 
$\cni$ is closed.
\end{lemma}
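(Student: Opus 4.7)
My plan is to verify that any $C'$ with $C \to C'$ and $C \in \cni$ satisfies all three defining conditions of $\cni$: membership in $\cpb$, universal security of agents, and modesty of every live bullet. Membership in $\cpb$ is immediate from Lemma \ref{lem:cpb_closed}. The remaining two conditions hinge on a key intermediate claim that I will establish first: no follower is promoted to a leader in $C \to C'$.

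To prove the intermediate claim, I will inspect Line 4, the unique source of new leaders, which fires only when the responder $r = u_{i+1}$ attains $r.\dist = N$ after Lines 1--2. Let $l = u_i$ be the initiator. I split cases on the status of $r$ and $l$ in $C$. If $r$ is already a leader, Line 2 resets $r.\dist$ to $0$. If $r$ is a follower with a bullet, Line 2 leaves $r.\dist$ unchanged, and security of $r$ forces $r.\dist \le N - \distr(C, i+1) \le N - 1$. If $r$ is a follower without a bullet, Line 2 assigns $r.\dist \gets \min(l.\dist + 1, N)$: when $l$ is a leader, security of $l$ gives $l.\dist = 0$ (and Line 1 enforces this anyway), so $r.\dist \le 1$; when $l$ is also a follower, $\distr(C, i) \ge 2$ because $r$ is a follower, so security of $l$ yields $l.\dist \le N - 2$ and thus $r.\dist \le N - 1$. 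In every case $r.\dist < N$, and Line 4 is skipped.

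With the intermediate claim in hand, the remaining conditions follow quickly. For modesty of live bullets in $C'$: each either persisted from $C$---in which case Lemma \ref{lem:modest} preserves modesty---or was newly fired during $C \to C'$. Since Line 4 is ruled out, the only source of a new live bullet is Line 7, and Lemma \ref{lem:newbullet} certifies that such a bullet is modest. For universal security: Lemma \ref{lem:modest_secure} applies because every live bullet in $C$ is modest, so any agent $u_i$ that turned insecure in $C'$ would require its left neighbor $u_{i-1}$ to have been insecure in $C$, contradicting $C \in \cni$. The main obstacle is the Line 4 case analysis, and in particular the follower--follower subcase, where one must correctly combine security of $l$ with the observation that $\distr(C, i) \ge 2$ whenever $r$ is still a follower in $C$.
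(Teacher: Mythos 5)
Your proof is correct and follows essentially the same route as the paper, which derives the closure of $\cni$ directly from Lemmas \ref{lem:cpb_closed}, \ref{lem:modest}, \ref{lem:newbullet}, and \ref{lem:modest_secure}. Your extra single-step case analysis showing that Line 4 cannot fire from a configuration in $\cni$ is sound and simply makes explicit what the paper leaves implicit (and later establishes, via closure, as Lemma \ref{lem:noleader}); in particular it correctly rules out live bullets created by Line 4, the one source not covered by Lemma \ref{lem:newbullet}.
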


\begin{proof}
Immediately follows from
Lemmas \ref{lem:cpb_closed}, \ref{lem:modest},
\ref{lem:newbullet}, and \ref{lem:modest_secure}.
\end{proof}

\begin{lemma}
\label{lem:noleader} 
No new leader is created
in an execution starting from a configuration in $\cni$.
\end{lemma}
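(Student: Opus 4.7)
My plan is to reduce the statement to a simple invariant check by leveraging the closure of $\cni$ (Lemma \ref{lem:cni_closed}). Since every configuration along the execution remains in $\cni$, it suffices to prove that in any step $C \to C'$ with $C \in \cni$, no follower in $C$ becomes a leader in $C'$. The only mechanism in Algorithm \ref{al:prl} that converts a follower to a leader is Line 4, which fires precisely when the test in Line 3 succeeds after the update in Line 2. Hence the entire task reduces to showing that for any interaction $(l,r)$ executed in a configuration $C \in \cni$, the value assigned to $r.\dist$ at Line 2 is strictly less than $N$.

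I would then proceed by a three-way case analysis matching the three branches of Line 2. If $r$ is a leader in $C$, Line 2 sets $r.\dist \gets 0$, well below $N$. If $r$ is a follower with $r.\bull > 0$, Line 2 leaves $r.\dist$ unchanged, and security of $r$ together with $\distr(C,r) \ge 1$ (since $r$ itself is a follower) gives $r.\dist \le N-1$. The remaining case, where $r$ is a follower with $r.\bull = 0$, is the one I expect to carry the real content: Line 2 writes $\min(l.\dist + 1, N)$ into $r.\dist$, and I would argue $l.\dist \le N-2$. If $l$ is a leader, this is immediate from security. If $l$ is a follower, then the very fact that its right neighbor $r$ is a follower forces $\distr(C,l) \ge 2$, and security of $l$ then yields $l.\dist \le N - \distr(C,l) \le N-2$.

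The only subtle point --- essentially the whole difficulty of the lemma --- is the final subcase, where one must recognize that the branch guard ``$r$ is a follower with $r.\bull = 0$'' in Line 2 bumps $\distr(C,l)$ from the generic bound $1$ up to $2$, tightening the security inequality for $l$ by exactly the single unit needed to keep the updated $r.\dist$ below the threshold $N$. Everything else is a direct unpacking of the definition of $\secure$ together with Lemma \ref{lem:cni_closed}; in particular, no additional reasoning about bullets, shields, signals, or the $\peaceful$/$\modest$ predicates is required at this point, because those were already consumed in proving the closure of $\cni$.
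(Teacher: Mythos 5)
Your proof is correct and follows essentially the same route as the paper: closure of $\cni$ (Lemma~\ref{lem:cni_closed}) plus the $\secure$ bound, with the key observation that an agent can hold $\dist = N-1$ only when its right neighbor is a leader, in which case the Line~2 propagation branch does not fire, so no $\dist$ ever reaches $N$ and Line~4 never executes. The paper merely compresses your three-case analysis of Line~2 into a single sentence.
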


\begin{proof}
Since $\cni$ is closed by Lemma \ref{lem:cni_closed},
every configuration that appears in an execution
starting from a configuration in $\cni$
is also in $\cni$.
All agents are secure in a configuration in $\cni$.
Thus, $u_i.\dist \le N-1$ holds for all $u_i \in V$
and $u_i.\dist = N-1$ holds only if
$u_{i+1}$ is a leader.
Therefore, in a configuration in $\cni$,
no agent increases its $\dist$ to $N$,
thus no new leader is created.
\end{proof}

Finally,
we define $\srl$ as the set of all configurations
included in $\cni$
where there is exactly one leader. 

\begin{lemma}
\label{lem:srl_closed} 
$\srl$ is closed and includes only safe configurations.
\end{lemma}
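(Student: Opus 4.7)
My plan is to decompose the statement into its two parts—closure of $\srl$, and safety of every $C \in \srl$—and to reduce both to a single key invariant: in every execution starting from a configuration in $\srl$, the unique leader is never killed and no new leader is ever created.

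For closure, I will take $C \in \srl$ and an arbitrary one-step successor $C'$. Because $\srl \subseteq \cni$ and $\cni$ is closed by Lemma~\ref{lem:cni_closed}, I get $C' \in \cni$ for free; so it suffices to show that $C'$ still contains exactly one leader. Lemma~\ref{lem:noleader} rules out the creation of new leaders, so the only risk is that the unique leader $u_L$ of $C$ is eliminated. Inspecting Algorithm~\ref{al:prl}, the only transition that turns a leader into a follower is Line~13, triggered by an interaction $(u_i,u_{i+1})$ with $C(u_i).\bull = 2$, $C(u_{i+1}).\leader = 1$, and $C(u_{i+1}).\shield = 0$. I plan to rule this out as follows: since $u_L$ is the only leader, $u_{i+1}=u_L$, and since $n \ge 2$ the agent $u_i$ is a follower, so the nearest left leader from $u_i$ is reached by walking all the way around the ring, giving $\distl(C,i) = n-1$ and $u_{i-\distl(C,i)} = u_L$. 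The live bullet at $u_i$ is peaceful (because $C \in \cpb$), and the peaceful predicate then forces $C(u_L).\shield = 1$, contradicting the killing precondition. Hence $u_L$ survives, the leader count remains one, and $C' \in \srl$.

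For safety of a given $C \in \srl$, condition~(i) of Definition~\ref{def:ss-le} is immediate from the defining property of $\srl$. For condition~(ii), I will do an induction along the execution $\Xi_{\prl}(C) = C_0, C_1, \ldots$ with $C_0 = C$: closure of $\srl$ (just established) puts every $C_k$ in $\srl$, and the same argument used for closure shows at every step that $u_L$ remains the unique leader of $C_k$ and no other agent ever toggles its $\leader$ bit. Consequently each agent's $\leader$-value, and therefore its output under $\outputs$, is preserved throughout the execution, yielding condition~(ii).

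The main obstacle will be the geometric identification $u_{i-\distl(C,i)} = u_L$ in the killing step: this uses uniqueness of $u_L$ together with the ring topology to pin $\distl(C,i)$ at $n-1$, after which the peaceful predicate supplies exactly the shield needed to block the kill. Everything else is bookkeeping on top of the already-established closure of $\cni$ and Lemma~\ref{lem:noleader}.
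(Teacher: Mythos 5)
Your proposal is correct and follows essentially the same route as the paper: closure of $\cni$ plus Lemma~\ref{lem:noleader} reduce everything to showing the unique leader cannot be killed, which is blocked because every live bullet is peaceful and (with $\distl(C,i)=n-1$) its nearest left leader is the unique leader itself, hence shielded. One minor refinement: Line~13 tests $r.\shield$ \emph{after} Lines~9--11 may have cleared it (when $r.\signal=1$), so the kill precondition in terms of the pre-configuration is not exactly $C(u_{i+1}).\shield=0$; you should additionally invoke the $\signal$-clause of $\peaceful$, which you already have for all agents in the range $j=0,\dots,\distl(C,i)$, to rule out this in-interaction unshielding.
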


\begin{proof}
Let $C$ be any configuration in $\srl$
and $C'$ any configuration such that $C \to C'$.
Since $\cni$ is closed by Lemma \ref{lem:cni_closed}
and exactly one agent
is a leader in $C$,
it suffices to show that
no one changes its output (\ie the value of variable $\leader$)
in $C \to C'$.
Since $C \in \cpb$,
the unique leader in $C$ is never killed
in $C \to C'$.
By Lemma \ref{lem:noleader},
any other agent becomes a leader
in $C \to C'$.
Thus, no agent changes its output 
in $C \to C'$.
\end{proof}

\subsection{Convergence}
\label{sec:convergence}
In this subsection, we prove
that an execution of $\prl$ 
starting from any configuration in $\call(\prl)$
reaches a configuration in $\srl$
within $O(nN)$ steps in expectation.
Formally,
for any $C \in \call(\prl)$ and $\calS \subseteq \call(\prl)$,
we define $\ect(C,\calS)$ as the expected number of steps
that execution $\Xi_{\prl}(C,\sch)$ requires to reach 
a configuration in $\calS$.
The goal of this subsection is
to prove $\max_{C \in \call(\prl)} \ect(C,\srl) \allowbreak =O(nN)$.
We give this upper bound by showing
$\max_{C \in \call(\prl)} \ect(C,\cni) =O(nN)$
and 
$\max_{C \in \call(\cni)} \ect(C,\srl)=O(n^2)$
in Lemmas 
\ref{lem:cni} and \ref{lem:srl}, respectively.



In this subsection,
we denote interaction $(u_i,u_{i+1})$ by $e_i$.
In addition,
for any two sequences of interactions
$s=e_{k_0},e_{k_1},\dots,e_{k_h}$
and $s'=e_{k'_0},e_{k'_1},\dots,e_{k'_{j}}$,
we define
$s\cdot s' = e_{k_0},e_{k_1},\dots,e_{k_h},e_{k'_0},e_{k'_1},\dots,e_{k'_{j}}$. That is, we use ``$\cdot$'' for the concatenation operator.
For any sequence $s$ of interactions and integer $i \ge 1$,
we define $s^i$ by induction:
$s^1 = s$ and $s^i = s\cdot s^{i-1}$.
For any $i,j \in \{0,1,\dots,n-1\}$,
we define $\seqr(i,j)=e_i,e_{i+1},\dots,e_j$
and $\seql(i,j)=e_i,e_{i-1},\dots,e_{j}$.

\begin{dfn}
Let $\gamma=e_{k_1},e_{k_2},\dots,e_{k_h}$
be a sequence of interactions.
We say that $\gamma$ occurs within $l$ steps when
$e_{k_1},e_{k_2},\dots,e_{k_h}$ occurs 
in this order (not necessarily in a row) 
within $l$ steps.
Formally,
the event
``$\gamma$ occurs within $l$ steps from a time step $t$''
is defined as the following event:
$\Gamma_{t_i}=e_{k_i}$ holds for all $i = 1,2,\dots,h$
for some sequence of integers 
$t \le t_1 < t_2 < \dots < t_{h} \le t+l-1$.
We say that from step $t$,
$\gamma$ completes at step $t+l$
if $\gamma$ occurs within $l$ steps
but does not occur within $l-1$ steps.
%
When $t$ is clear from the context, 
we write ``$\gamma$ occurs within $l$ steps''
and ``$\gamma$ completes at step $l$'',
for simplicity.
\end{dfn}


\begin{lemma}
\label{lem:occurs}
From any time step,
a sequence $\gamma=e_{k_1},e_{k_2},\dots,e_{k_h}$
with length $h$
occurs within $nh$ steps in expectation.
\end{lemma}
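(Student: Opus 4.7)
The plan is to analyze the sequence of waiting times between successive target interactions and apply linearity of expectation.

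First, I would fix an arbitrary starting time step, say $t_0$, and define a sequence of random variables $T_0, T_1, \ldots, T_h$ where $T_0 = t_0$ and for each $i \in \{1, 2, \ldots, h\}$, $T_i$ is the first step at or after $T_{i-1}+1$ at which the scheduler selects the interaction $e_{k_i}$. The event ``$\gamma$ occurs within $l$ steps from step $t_0$'' is then equivalent to $T_h \le t_0 + l - 1$. Hence it suffices to show $\ex[T_h - T_0] \le nh$.

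Next, I would bound $\ex[T_i - T_{i-1}]$ for each $i$. By the definition of the uniformly random scheduler, the random variables $\Gamma_{T_{i-1}+1}, \Gamma_{T_{i-1}+2}, \ldots$ are independent of each other and of the history up to $T_{i-1}$, and each equals $e_{k_i}$ with probability $1/n$. Therefore $T_i - T_{i-1}$ is a geometric random variable with success probability $1/n$, so $\ex[T_i - T_{i-1}] = n$. Being careful about the conditioning on $T_{i-1}$: since the bound holds conditional on any value of $T_{i-1}$, it holds unconditionally by the tower property.

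Finally, by linearity of expectation,
\begin{equation*}
\ex[T_h - T_0] = \sum_{i=1}^{h} \ex[T_i - T_{i-1}] = nh.
\end{equation*}
This gives the required bound; applying Markov's inequality or just reading off the expectation directly yields the statement of the lemma.

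I do not anticipate a real obstacle here; the only subtlety is being explicit that the ``restart'' at each $T_{i-1}$ preserves the i.i.d.\ structure of the remaining interactions, which is immediate from the definition of $\sch$ as a sequence of independent uniform random variables over $E$.
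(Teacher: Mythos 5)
Your proof is correct and follows essentially the same route as the paper: the paper's own argument is just the informal version of your decomposition, namely that each target interaction occurs with probability $1/n$ per step, so the wait for each of the $h$ interactions is $n$ steps in expectation, and summing gives $nh$. Your write-up merely makes the stopping-time/geometric-variable structure and the linearity-of-expectation step explicit (the aside about Markov's inequality is unnecessary, since the claim is purely about the expectation).
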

\begin{proof}
For any interaction $e_i$,
at each step, $e_i$ occurs with probability $1/n$.
Thus, $e_i$ occurs within $n$ steps in expectation.
Therefore, $\gamma$ occurs within $nh$ steps in expectation.
\end{proof}

\begin{lemma}
\label{lem:create} 
Let $C$ be a configuration where no leader exists.
In execution $\Xi = \Xi_{\prl}(C,\sch)$,
a leader is created within $O(nN)$ steps in expectation.
\end{lemma}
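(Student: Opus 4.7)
The plan is to apply Lemma \ref{lem:occurs} with a specific deterministic sequence $\gamma$ of length $O(N)$ whose occurrence as a subsequence forces a leader to be created. A natural candidate is $\gamma=\seqr(0,n-1)^{k}$, the concatenation of $k=\lceil N/n\rceil + O(1)$ forward rounds $e_0 e_1 \cdots e_{n-1}$ of the entire ring. Since $|\gamma|=O(N)$, Lemma \ref{lem:occurs} immediately yields that $\gamma$ occurs within $O(nN)$ expected steps from any starting time, so the remaining work is to show that the occurrence of $\gamma$ guarantees leader creation.

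To argue this, I would use two monotonicity facts available in the no-leader regime. First, as long as no leader exists no new bullets can be created, because Lines 4, 7, and 10 all require a leader; hence the bullet count is monotonically non-increasing. Second, by Lemma \ref{lem:disappear} every bullet must disappear within $N$ of its own move attempts, and the only possible disappearance modes under the no-leader hypothesis are collision with another bullet or the event $u_i.\dist\ge N-1$, which immediately fires Line 4 and produces a new leader. Suppose for contradiction that $\gamma$ is observed but no leader is ever created during its observation. The first forward round of $\gamma$ then sweeps all bullets in lock-step left to right, so any trailing bullet catches up to the bullet ahead of it and dies by collision; after this round at most one bullet remains and it resides at $u_0$. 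In each subsequent round the Line 2 cascade refreshes $u_1.\dist, u_2.\dist,\dots,u_{n-1}.\dist$ in order and then $u_0.\dist$ is refreshed at $e_{n-1}$, because the lone surviving bullet (if any) rides at the front of the sweep and never blocks the responder slot at $u_0$; thus $u_0.\dist$ gains $n$ per round. After at most $\lceil N/n\rceil$ such rounds, some $u_{i+1}.\dist$ is forced to $N$ by a Line 2 update, firing Line 4 and creating a leader, contradicting the assumption.

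The main obstacle is that Lemma \ref{lem:occurs} allows arbitrary interactions to occur between consecutive $\gamma$-edges in the subsequence, and such intermediate steps can perturb the clean cascade above, for example by overwriting a large $u_i.\dist$ with a smaller one or by shifting the surviving bullet into an awkward position. The careful technical step will be to show that these intermediate interactions, which under the no-leader hypothesis still cannot introduce new bullets or create leaders, at worst delay the cascade but cannot reverse the monotone progress that the $\gamma$-edges accumulate; enlarging $k$ by a constant factor (which preserves $|\gamma|=O(N)$ and thus the $O(nN)$ expected-time bound) should suffice to absorb this slack.
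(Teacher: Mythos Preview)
Your overall setup matches the paper's: take $\gamma=(\seqr(0,n-1))^{\lceil N/n\rceil+1}$ and invoke Lemma~\ref{lem:occurs} for the $O(nN)$ bound. But your argument for why the occurrence of $\gamma$ forces a leader has a genuine gap in the bullet case.

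Your claim that one forward round reduces the bullet count to at most one is false. In a ring of size $4$ with bullets at $u_0$ and $u_2$, the schedule $e_0,e_2,e_1,e_3,e_0,e_2,e_1,e_3$ contains $e_0,e_1,e_2,e_3$ as a subsequence (at positions $0,2,5,7$) yet returns both bullets to $u_0$ and $u_2$, having moved them in lockstep. This pattern can be iterated indefinitely, so no constant enlargement of $k$ forces a merge: two bullets can coexist through arbitrarily many rounds of $\gamma$. Your hand-wave that intermediate interactions ``at worst delay the cascade but cannot reverse the monotone progress'' is therefore not justified for the merging claim; intermediate steps can permanently prevent collisions.

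The paper sidesteps this by separating into two clean cases. If there is at least one bullet, it does not try to merge them. Instead it observes that during $\gamma$ every bullet moves at least $N$ times, so by Lemma~\ref{lem:disappear} every bullet in the initial set $B$ must disappear; under the no-leader-created hypothesis the only disappearance mode is collision with another bullet, but no bullet outside $B$ is ever created (followers cannot fire) and the bullets in $B$ cannot all vanish by hitting one another (the last survivor would have nothing to hit)---a contradiction. If there is no bullet, the paper tracks the agent $u_i$ of \emph{minimum} $\dist$ in $C$ and shows $u_i.\dist$ grows by at least $n$ per completion of $\seqr(i,i-1)$; the monotonicity of the global minimum is exactly what makes this robust to intermediate steps. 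Your $u_0$-based cascade can be repaired along those lines, but the bullet case needs the paper's contradiction argument (which you already cite the ingredients for), not a merging claim.
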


\begin{proof}
Let $\gamma = (\seqr(0,n-1))^{\lceil N/n \rceil+1}$.
Since the length of $\gamma$ is at most $N+2n$,
$\gamma$ occurs within $3nN$ steps in expectation by Lemma \ref{lem:occurs}.
Thus, it suffices to show that
a leader is created before $\gamma$ completes in $\Xi$.
Assume for contradiction that 
no leader is created before $\gamma$ completes in $\Xi$.

First, consider the case that there is at least one bullet
in $C$. 
Let $B$ be the set of bullets that exist in $C$.
Before $\gamma$ completes, all bullets in $B$ disappear
by Lemma \ref{lem:disappear}.
This implies that
at least one bullet in $B$ disappear
by reaching a bullet not included in  $B$,
since there is no leader during the period.
However, no bullet $b \notin B$ exists during the period
because a follower never creates a new bullet.
This is a contradiction.

Second, consider the case that there is no bullet in $C$.
Let $u_i$ be any agent with the minimum $\dist$ in $C$.
Each time $\seqr(i,i-1)$ completes,
$u_i.\dist$ increases at least by $n$
unless a new leader is created. 
Since $(\seqr(i,i-1))^{\lceil N/n \rceil}$ completes earlier than
$\gamma = (\seqr(0,n-1))^{\lceil N/n \rceil+1}$,
a new leader is created before $\gamma$ completes,
a contradiction.
\end{proof}

\begin{lemma}
\label{lem:cni} 
$\max_{C \in \call(\prl)} \ect(C,\cni)=O(nN)$.
\end{lemma}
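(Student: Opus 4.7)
The plan is to bound $\ect(C,\cni)$ by $O(nN)$ via a two-phase analysis: first reach a configuration in $\cpb$, then reach $\cni$ from $\cpb$. The main technical tool throughout is the interaction sequence $\gamma = (\seqr(0,n-1))^{\lceil N/n\rceil+1}$, which has length at most $N+2n = O(N)$ and hence, by Lemma \ref{lem:occurs}, occurs within $O(nN)$ expected steps from any starting step.

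For Phase~1, I will show that every bullet present at the start of $\gamma$ disappears before $\gamma$ completes: a bullet starting at position $i$ experiences $n-i$ move attempts in the first round (as $e_i,e_{i+1},\dots,e_{n-1}$ fire in order, each attempting to push the bullet one step right) and, if it has not yet disappeared, is carried through each of the remaining $\lceil N/n\rceil$ rounds at $n$ move attempts per round, for a total of at least $1+\lceil N/n\rceil\cdot n\ge N$ move attempts; by Lemma \ref{lem:disappear} it must have disappeared. Consequently, every live bullet present after $\gamma$ was fired after $\gamma$ began, and inspection of Lines~4 and 6--8 shows that any such newly-fired live bullet is peaceful (the firing leader becomes shielded and the nearby signals are cleared). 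If a leader is present at the end of $\gamma$, the configuration is in $\cpb$; otherwise the ring is bullet-free (since a leaderless interval produces no new bullets and the initial ones have vanished), and Lemma \ref{lem:create} creates a fresh leader with a peaceful bullet in $O(nN)$ additional expected steps, putting us into $\cpb$.

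For Phase~2, closure of $\cpb$ (Lemma \ref{lem:cpb_closed}) guarantees a permanent leader. Moreover, every live bullet present upon entering $\cpb$ along this path is newly-fired, and hence modest by Lemma \ref{lem:newbullet}; by Lemma \ref{lem:modest} modest bullets stay modest. The only condition remaining for $\cni$ is agent security. I plan to use Lemma \ref{lem:modest_secure}: insecurity can propagate only via an interaction with an insecure left neighbor, so the set of insecure agents cannot expand on its own. Meanwhile, each leader continuously resets its own $\dist$ to $0$ via Lines~1--2 and sends the correct $\dist$-increment forward, which (combined with $\distl(i)+\distr(i)\le n\le N$) certifies security as it propagates; I will argue that within $O(nN)$ expected steps enough $\seqr$-interactions have occurred for this correct $\dist$-wave to cover the ring.

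The main obstacle I foresee is exactly this final $\dist$-propagation step, because live bullets continuously obstruct the $\dist$-update in Line~2 (which requires $r.\bull=0$), so a naive ``one clean round propagates everything'' argument does not directly apply. I plan to overcome this by exploiting the modest property itself: a modest bullet at $u_i$ certifies $u_k.\dist\le\distl(k)$ for every agent $u_k$ in its trail back to the nearest shielded leader, so live bullets actually carry correct $\dist$-information forward rather than obstructing it; combining this with the insecurity-containment of Lemma \ref{lem:modest_secure} and the steady $\dist=0$ emission from each leader will close the argument. Summing the two phases then gives $\max_{C\in\call(\prl)}\ect(C,\cni)=O(nN)$.
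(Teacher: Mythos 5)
Your overall route is the same as the paper's: flush all initial bullets with $\gamma=(\seqr(0,n-1))^{\lceil N/n\rceil+1}$ via Lemmas \ref{lem:disappear} and \ref{lem:occurs}, create a leader via Lemma \ref{lem:create} if none survives, note that every surviving live bullet was newly fired and hence modest (Lemmas \ref{lem:newbullet} and \ref{lem:modest}), and then argue that all agents become secure. Two problems, one minor and one substantive. The minor one: in your leaderless branch the claim ``the ring is bullet-free'' is unjustified — the execution during $\gamma$ need not be leaderless, so leaders that existed and were later killed may have left bullets behind; what you actually need (and can get) is only that no non-peaceful live bullet survives, since every surviving bullet is newly fired.

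The substantive gap is Phase~2, which is precisely the step you leave as a plan. Your sketched fix does not close it: the Line-2 update is blocked by \emph{any} bullet, and dummy bullets are not constrained by modesty, so ``modest bullets carry correct $\dist$-information'' does not cover the obstruction you identified; moreover, security can also be destroyed from the right, since the death of the leader $u_{i+\distr(i)}$ increases $\distr(i)$, and a pure left-to-right ``correct $\dist$-wave'' picture does not address this. The paper closes exactly this hole with two claims about the suffix $\Xi'$ in which all live bullets are modest and every leader has $\dist=0$: (a) once an agent becomes a leader in $\Xi'$ it is secure forever afterwards, even after being killed, because a leader is killed only by a modest live bullet, whose whole trail back to a shielded leader is secure, so the ex-leader's security is re-anchored at a surviving leader (Claim \ref{clm:secure}); and (b) an insecure agent becomes secure upon interacting with a secure left neighbor (Claim \ref{clm:become_secure}). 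Together with Lemma \ref{lem:modest_secure} (a secure agent is made insecure only by an insecure left neighbor), a single rightward sweep $\seqr(i,i-2)$ started at a leader $u_i$ secures every agent, and by Lemma \ref{lem:occurs} this costs only $O(n^2)$ expected steps. Without an anchoring statement of type (a) and a propagation statement of type (b), your ``enough $\seqr$-interactions cover the ring'' assertion is a promissory note, not a proof, so the proposal as written does not establish the lemma.
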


\begin{proof}
Let $C_0$ be any configuration in $\call(\prl)$
and consider $\Xi = \Xi_{\prl}(C_0,\sch)$.
All leaders that exist in $C_0$ disappear
before $\gamma = (\seqr(0,n-1))^{\lceil N/n \rceil+1}$ completes
by Lemma \ref{lem:disappear},
while $\gamma$ occurs
within $O(nN)$ steps in expectation by Lemma \ref{lem:occurs}.
Thus, by Lemmas \ref{lem:modest}, \ref{lem:newbullet}, \ref{lem:create}, within $O(nN)$ steps in expectation,
$\Xi$ reaches a configuration $C'$
where all live bullets are modest, there is at least one leader,
and every leader $u_i$ satisfies $u_i.\dist = 0$.

Let $\Xi'$ be the suffix of $\Xi$ after $\Xi$ reaches $C'$.
In the rest of this proof,
we show that $\Xi'$ reaches a configuration
in $\cni$ within $O(n^2)$ steps. 
Since $\cpb$ is closed (Lemma \ref{lem:cpb_closed})
and $C' \in \cpb$,
there is always at least one leader in $\Xi'$.
Thus, by Lemmas \ref{lem:modest} and \ref{lem:newbullet},
it suffices to show that
$\Xi'$ reaches a configuration where all agents are secure
within $O(n^2)$ steps in expectation.

Here, we have the following two claims.
\begin{clm}
\label{clm:secure}
In $\Xi'$, once an agent $u_i$ becomes a leader,
$u_i$ is always secure thereafter
(even after it becomes a follower).
\end{clm}
\begin{proof}
Suppose that $u_i$ is a leader in some point of $\Xi'$. 
At this time, $u_i.\dist = 0$.
As long as $u_i$ is a leader, $u_i$ is secure.
Agent $u_i$ becomes a follower only
when a live bullet reaches $u_i$.
In $\Xi'$, all live bullets are modest.
This yields that, when $u_i$ becomes a follower,
all agents $u_{i},u_{i-1},\dots,u_{i-\distl(i)}$
are secure.
Thus, letting $u_j = u_{i-\distl(i)}$,
agent $u_{i}$ is secure as long as $u_{j}$ is secure.
Similarly, $u_j$ is secure as long as $u_j$ is a leader.
Even if $u_j$ becomes a follower,
there is a leader $u_k$ such that
$u_j$ is secure as long as $u_k$ is a leader, and so on.
Therefore, $u_i$ never becomes insecure.
\end{proof}

\begin{clm}
\label{clm:become_secure}
In $\Xi'$, 
an insecure agent $u_i$ becomes secure
if it interacts with $u_{i-1}$ when $u_{i-1}$ is secure. 
\end{clm}
\begin{proof}
Let $C \to C'$ be any transition that appears in $\Xi$
such that $u_{i-1}$ is secure in $C$.
If $u_{i-1}$ is a leader in $C$ (thus in $C'$),
$C'(u_i).\dist \le 1 \le N-\distr(C',i)$.
Otherwise, $\distr(C',i)=\distr(C',i-1)+1$ must hold.
By Lemma \ref{lem:modest_secure},
$u_{i-1}$ is still secure in $C'$,
hence $C'(u_i).\dist \le C'(u_{i-1}).\dist + 1 \le N -\distr(C',i-1) + 1 = N-\distr(C',i)$.
Thus, $u_{i}$ is secure in $C'$ in both cases.
\end{proof}

Let $u_i$ be a leader in $C'$.
By Lemma \ref{lem:modest_secure}
and Claims \ref{clm:secure} and \ref{clm:become_secure},
all agents are secure when $\seqr(i,i-2)$ completes.
This requires $O(n^2)$ expected steps by Lemma \ref{lem:occurs}.
\end{proof}


\begin{lemma}
\label{lem:killed}
Let $C_0$ be a configuration in $\cni$
where there are at least two leaders.
Let $u_i$, $u_j$, and $u_k$ be the leaders in $C_0$ 
such that $i = j - \distl(C_0,j)$ and $j = k - \distl(C_0,k)$,
\ie $u_i$ is the nearest left leader of $u_j$
and $u_j$ is the nearest left leader of $u_k$ in $C_0$.
($u_i = u_k$ may hold.)
Let $d_1=\distl(C_0,j)$ and $d_2 = \distl(C_0,k)$.
Then, in an execution $\Xi=\Xi_{\prl}(C_0,\sch)=C_0,C_1,\dots$,
the event that one of the three leaders becomes a follower
occurs within $O(n(d_1 + d_2))$ steps in expectation.
\end{lemma}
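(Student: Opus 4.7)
My plan is to prove the lemma by exhibiting a specific interaction sequence $\gamma$ of length $O(d_1+d_2)$ such that, when $\gamma$ occurs in the execution starting from $C_0$, at least one of $u_i, u_j, u_k$ becomes a follower with constant probability (where the remaining randomness comes from the initial state and from scheduling). Combined with Lemma~\ref{lem:occurs}, which gives that $\gamma$ occurs within $O(n(d_1+d_2))$ expected steps, and a standard geometric repeat-until-success argument, this yields the claimed $O(n(d_1+d_2))$ expected bound.

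Concretely, I would take
\[
\gamma = \seql(k-1, i) \cdot e_i \cdot \seqr(i+1, k-1),
\]
of total length $2(d_1+d_2)$. The three segments play distinct roles. The first segment $\seql(k-1, i)$ propagates the bullet-absence signal leftward from $u_k$ through $u_j$ to $u_i$ via Line~19; critically, at the interaction $e_{j-1}$ inside this segment the freshly-set $u_j.\signal = 1$ triggers Lines~9--11, so $u_j$ fires a dummy bullet and becomes unshielded, while the signal continues to propagate leftward via $u_j$'s leader property. The single interaction $e_i$ then triggers Lines~6--8 on $u_i$ (whose signal has just been set by the propagation), firing a live bullet which is immediately handed off to $u_{i+1}$ via Lines~15--17. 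Finally, $\seqr(i+1, k-1)$ walks this live bullet rightward, and when it reaches $u_{j-1}$, Lines~12--14 kill the now-unshielded $u_j$.

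The principal obstacle is interference from the initial state, which requires a case analysis on the values of $u_i.\signal$, $u_j.\signal$, and $u_k.\shield$ in $C_0$, as well as on existing bullets in the arc from $u_i$ to $u_k$. Because $C_0 \in \cni$, all live bullets are modest and peaceful, which together with Lemmas~\ref{lem:modest} and~\ref{lem:disappear} bounds how they can interfere with signal propagation and how long they survive. The most delicate sub-case is when $u_j.\signal = 1$ initially: at $e_j$ inside $\seql$, $u_j$ fires a live bullet of its own and becomes shielded, so the bullet from $u_i$ later fails against $u_j$; however, $u_j$'s freshly-fired live bullet sits at $u_{j+1}$ and is carried rightward by the tail of $\seqr(i+1, k-1)$, where it can kill $u_k$ whenever $u_k$ is unshielded. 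In the worst residual case (both $u_j.\signal = 1$ initially and $u_k$ remains shielded throughout), $\gamma$ nevertheless clears interfering bullets from the arc so that a subsequent occurrence of $\gamma$ matches the nominal case; the geometric repetition argument absorbs such failures with only a constant-factor overhead in expected time, preserving the $O(n(d_1+d_2))$ bound.
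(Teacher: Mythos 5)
Your overall plan (exhibit a short interaction sequence, argue a constant success probability per occurrence, and repeat geometrically via Lemma~\ref{lem:occurs}) has the same shape as the paper's proof, but the central step is not actually established, and the specific reasoning you attach to your sequence is wrong. Recall that ``$\gamma$ occurs within $l$ steps'' only means $\gamma$ appears as a \emph{subsequence} of the schedule; arbitrary interactions are interleaved. So the occurrence of $e_{j-1}$ after the signal reaches $u_j$ does not force $u_j$ to execute Lines 9--11: nothing prevents an interleaved occurrence of $e_j$ from coming first (which happens with probability $1/2$ at that juncture), in which case $u_j$ fires a \emph{live} bullet via Lines 6--8, becomes shielded, resets its signal, and your nominal path fails; the symmetric problem occurs at $u_i$, where an interleaved $e_{i-1}$ makes $u_i$ fire a dummy bullet that cannot kill anyone. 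These choices are exactly the protocol's random coins, and they are correlated with the conditioning event ``$\gamma$ occurs within $T$ steps,'' so you cannot simply multiply the occurrence bound by an asserted constant success probability without isolating fresh, schedule-independent randomness. Two further holes remain even in the nominal case: (a) between $u_j$'s dummy firing and the arrival of $u_i$'s bullet, $u_j$'s dummy may reach $u_k$, a new bullet-absence signal may return, and $u_j$ may re-fire and become shielded again---the paper needs precisely its coupling observation that live bullets fired by $u_j$ start at $u_{j+1}$ while dummy ones start at $u_j$ to control the staleness of $u_j$'s shield at the decisive moment; and (b) one of $u_i,u_j,u_k$ may be killed by outside bullets while your sequence is still unfolding (benign, since that is the desired event, but a correct proof must account for it; the paper does so cleanly with the coupled protocol $\pdummy$ obtained by deleting Line 13 and the inequality $\tdel \le \tdel'$).

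Your recovery claim---that a failed pass of $\gamma$ ``clears'' the arc so that the next pass matches the nominal case---is also unjustified: $u_i$ and $u_j$ keep firing new bullets whenever fresh signals reach them, and each firing is again a fair coin between live/shielded and dummy/unshielded, so no fixed number of passes deterministically produces your nominal configuration. What the repetition argument needs is a per-pass success probability bounded below by a constant \emph{uniformly} over all reachable configurations and \emph{robustly} against arbitrary interleavings. The paper obtains exactly this by choosing $\seqr(j,k-1)\cdot\seql(k-1,j)\cdot e_j$ followed by $\seqr(i,j-1)\cdot\seql(j-1,i)\cdot\seqr(i,j-1)$, whose completion deterministically guarantees that $u_j$ has fired at least one bullet and that $u_i$ has subsequently fired at least one bullet reaching $u_j$, and then extracting probability at least $1/4$ from the two fresh coins at those firings. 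As written, your constant-probability claim is asserted rather than proved, so the proposal has a genuine gap at its core step.
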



\begin{proof}
 Let $\tdel$ be the minimum integer $t$ such that
 $u_i$, $u_j$, or $u_k$ is a follower in $C_{t}$.
 Our goal is to prove $\ex[\tdel] = O(n(d_1+d_2))$.
 Let $V' = \{u_i,u_{i+1},\dots,u_k\}$.
 Consider the protocol, denoted by $\pdummy$,
 that can be obtained by removing Line 13 from $\prl$.
 No leader becomes a follower
 in any execution of $\pdummy$.
 Let $\Xi'=\Xi_{\pdummy}(C_0,\sch)=D_0,D_1,\dots$.
 Of course, $\Xi$ and $\Xi'$ can be different.
 However, each agent in $V'$ always has the same state
 both in $\Xi$ and $\Xi'$
 before $u_i$, $u_j$, or $u_k$ becomes a follower in $\Xi$. 
 Formally, $C_t(u_s)=D_t(u_s)$ holds for all $u_s \in V'$
 and all $t=0,1,\dots,\tdel-1$.
 Therefore, we have $\tdel \le \tdel'$,
 where $\tdel'$ is the minimum integer $t$
 such that a live bullet reaches $u_j$ in $D_{t-1} \to D_t$
 and $u_j$ is unshielded at this interaction,
 \ie $D_{t-1}(u_j).\shield=0$;
 because if $u_j$ is still a leader in $C_t$,
 $u_i$ or $u_k$ must become a follower in $C_0,C_1,\dots,C_{t-1}$.
 Thus, it suffices to show that $\tdel'=O(n(d_1+d_2))$.
 
Again, no leader becomes a follower in $\Xi'$.
In addition, no leader is created in $\Xi'$
by Lemma \ref{lem:noleader}.
Leader $u_j$ fires a bullet at least once in $\Xi'$ 
before or when $\gamma = \seqr(j,k-1)\cdot \seql(k-1,j) \cdot e_j$ completes.
Thereafter, at any step, 
$u_j$ is shielded with probability at least $1/2$
because
(i) each time $u_j$ fires a bullet,
 it fires a live bullet and becomes shielded with probability $1/2$ and fires a dummy bullet and becomes
unshielded with probability $1/2$, and
(ii) live bullets fired by $u_j$ reach $u_k$
not later than dummy bullets fired by $u_j$
because live bullets are initially located at $u_{j+1}$
when they are fired by $u_j$, while
dummy bullets are initially located at $u_j$ when they are fired
by $u_j$.
After $u_j$ fires a bullet for the first time, 
$u_i$ fires a bullet at least once
and the bullet reaches $v_j$ in $\Xi'$
before or when
$\gamma' = \seqr(i,j-1)\cdot \seql(j-1,i) \cdot \seqr(i,j-1)$ completes.
This bullet is a live one with probability $1/2$,
while $u_j$ is unshielded at this time
with probability at least $1/2$, as mentioned above.
Thus, each time $\gamma \cdot \gamma'$ completes,
the event that a live bullet reaches $r_j$ at the time
$r_j$ is shielded occurs with probability at least $1/4$.
Since $\gamma \cdot \gamma'$ occurs within $O(n(d_1+d_2))$ steps
in expectation by Lemma \ref{lem:occurs},
we can conclude that $\ex[\tdel'] = O(n(d_1+s_2))$.
\end{proof}

\begin{lemma}
\label{lem:srl} 
$\max_{C \in \call(\cni)} \ect(C,\srl)=O(n^2)$.
\end{lemma}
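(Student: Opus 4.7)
The plan is to iterate Lemma \ref{lem:killed} so that the leader count drops from its initial value to exactly one within $O(n^2)$ expected steps. Because $\cni$ is closed (Lemma \ref{lem:cni_closed}) and no new leaders are created inside $\cni$ (Lemma \ref{lem:noleader}), the leader count along the execution is monotonically non-increasing and the configuration remains in $\cni$; hence the first moment at which exactly one leader exists lies in $\srl$.

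Fix a configuration in $\cni$ with $k \geq 2$ leaders and let $\ell_1,\ldots,\ell_k$ denote the distances between cyclically consecutive leaders, so $\sum_j \ell_j = n$. For each $j$, apply Lemma \ref{lem:killed} to the triple of consecutive leaders whose intermediate distances are $(d_1,d_2) = (\ell_j, \ell_{j+1})$: one of those three leaders becomes a follower within $O(n(\ell_j+\ell_{j+1}))$ expected steps. Viewing the $k$ triples as concurrent killing processes, the aggregate rate gives an expected waiting time for the next kill of order $O\!\left(n \big/ \sum_j \tfrac{1}{\ell_j+\ell_{j+1}}\right)$. Since $\sum_j(\ell_j+\ell_{j+1})=2n$, the AM--HM (equivalently Cauchy--Schwarz) inequality yields $\sum_j \tfrac{1}{\ell_j+\ell_{j+1}} \geq \tfrac{k^2}{2n}$, and therefore the expected time to decrease the leader count from $k$ to at most $k-1$ is $O(n^2/k^2)$.

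Summing this per-round bound over $k$ from the initial leader count (at most $n$) down to $2$, the total expected time to reach $\srl$ is $\sum_{k=2}^{n} O(n^2/k^2) = O(n^2)\cdot\sum_{k \geq 2}\tfrac{1}{k^2} = O(n^2)$, as required.

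The main obstacle is the ``concurrent processes'' step: Lemma \ref{lem:killed} only bounds the expected first-success time for a single triple, but aggregating $k$ such bounds into a combined hazard rate requires care because the triples share agents, bullets, and the global scheduler. To make this rigorous one must revisit the proof of Lemma \ref{lem:killed} and extract a geometric-type tail -- in that proof each occurrence of the sequence $\gamma\cdot\gamma'$ produces a kill with probability at least $1/4$, yielding an $\Omega(1/(n(d_1+d_2)))$ per-step success rate -- and then couple the $k$ triples' attempts so that the minimum first-success time is controlled by $O(n/\!\sum_j 1/(\ell_j+\ell_{j+1}))$. With that rate aggregation in hand, the $\sum 1/k^2$ telescoping above produces the stated $O(n^2)$ bound.
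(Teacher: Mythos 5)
Your overall route---monotone, never-zero leader count inside $\cni$ (Lemmas \ref{lem:cni_closed} and \ref{lem:noleader}) plus repeated use of Lemma \ref{lem:killed}---is the same as the paper's, but the step that actually delivers $O(n^2)$ is missing. Lemma \ref{lem:killed} bounds only the \emph{expectation} of the kill time for a single triple; from $k$ such bounds on heavily dependent random times (all triples are driven by the same scheduler, and adjacent triples share leaders and bullets) you cannot conclude that the time to the next kill is $O\bigl(n \big/ \sum_j 1/(\ell_j+\ell_{j+1})\bigr)=O(n^2/k^2)$. Adding reciprocals of expected times is a hazard-rate argument valid for independent, memoryless processes; with arbitrary correlations all you can guarantee is $\ex[\min_j T_j]\le \min_j \ex[T_j]$, where $T_j$ is the first time a leader of the $j$-th triple dies (the triples' kill times could essentially coincide). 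That gives only $O\bigl(n\cdot\min_j(\ell_j+\ell_{j+1})\bigr)=O(n^2/k)$ per kill, hence $O(n^2\log n)$ overall, which falls short of the claim. You yourself flag this aggregation as ``the main obstacle'' and defer it to an unspecified tail-extraction and coupling argument, so the central quantitative step of the lemma is asserted rather than proved.

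For comparison, the paper sidesteps rate aggregation entirely. With $s$ leaders it restricts attention to leaders whose two relevant gaps are both at most $4n/s$ (a constant fraction of all leaders qualifies, since the gaps sum to $n$), applies Lemma \ref{lem:killed} together with Markov's inequality to each such triple to get a kill within a \emph{common} window of $O(n^2/s)$ steps with probability at least $1/2$, and then uses the elementary fact that among possibly dependent events each of probability at least $1/2$, a constant fraction occur with constant probability. Hence at least $s/12$ leaders die per $O(n^2/s)$-step phase, the leader count decreases geometrically, and the phase lengths form a geometric series summing to $O(n^2)$; a final constant number of direct applications of Lemma \ref{lem:killed} reduces the count to one. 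If you want to keep your one-kill-at-a-time accounting with an $O(n^2/k^2)$ bound per kill, you would need a fixed-window, per-phase success-probability argument of this kind (or a genuine independence/coupling analysis), not an aggregation of expected-time bounds.
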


\begin{proof}
Let $C_0$ be any configuration in $\cni$
and let $\Xi = \Xi_{\prl}(C_0,\sch)$.
By Lemmas \ref{lem:cni_closed} and \ref{lem:noleader},
the number of leaders is monotonically non-increasing
and never becomes zero in $\Xi$.
For any real number $x$,
define $\calL_{x}$ as the set of configurations 
where the number of leaders is at most $x$.
First, we prove the following claim.
\begin{clm}
\label{clm:decrease}
Let $\alpha = 12/11$.
For any sufficiently large integer $k=O(1)$,
if $C_0 \in (\calL_{\alpha^{k+1}} \setminus \calL_{\alpha^{k}})
\cap \cni$,
execution $\Xi$ reaches a configuration in $\calL_{\alpha^{k}} \cap \cni$
within $O(n^2/\alpha^{k})$ steps in expectation.
\end{clm}
\begin{proof}
Let $l_0,l_1,\dots,l_{s-1}=u_{\pi_0},u_{\pi_1},\dots,u_{\pi_{s-1}}$ be the leaders in $C$,
where $\pi_0 < \pi_1 < \dots < \pi_{s-1}$.
We say that $l_j$ and $l_{j+1 \bmod s}$ are neighboring leaders
for each $j=0,1,\dots,s-1$.
Since there are $s$ leaders in $C_0$,
there are at least $3s/4$ leaders $l_i=u_{\pi_i}$ such that
$\distl(\pi_{i+1 \bmod s}) \le 4n/s$.
Thus,
there are at least $n/2$ leaders $l_j=u_{\pi_j}$
such that $\distl(\pi_{j+1 \bmod s}) \le 4n/s$
and $\distl(\pi_{j+2} \bmod s) \le 4n/s$ in $C_0$.
Let $S_L$ be the set of all such leaders.
For each $l_j \in S_L$,
by Lemma \ref{lem:killed} and Markov inequality,
$l_j$, $l_{j+1 \bmod s}$, or $l_{j+2 \bmod s}$
becomes a follower within $O(n^2/s)$ steps
with probability $1/2$.
Generally, if $X_0, X_1, \dots, X_i$ are 
(possibly non-independent) events each of which
occurs with probability at least $1/2$,
no less than half of the events occurs 
with probability at least $1/2$.
Thus, with probability $1/2$, 
for no less than half of the leaders $l_j$ in $S_L$,
the event that $l_j$, $l_{j+1 \bmod s}$, or $l_{j+2 \bmod s}$
becomes a follower occurs within $O(n^2/s)$ steps.
Thus, at least $|S_L| \cdot (1/2) \cdot (1/3)=s/12$ leaders
becomes followers within $O(n^2/s)$ steps
with probability at least $1/2=\Omega(1)$. 
Repeating this analysis,
we observe that $\Xi$ reaches
a configuration in $\calL_{\alpha^{k}} \cap \cni$
within $O(n^2/\alpha^{k})$ steps in expectation.
\end{proof}

By Claim \ref{clm:decrease},
for sufficiently large integer $k=O(1)$,
the number of leaders becomes a constant
(\ie $O(\alpha^k)=O(1)$) within
$\sum_{i=k}^{\lceil \log_\alpha n \rceil}O(n^2/\alpha^i)=O(n^2)$
steps in expectation in $\Xi$.
Thereafter, by Lemma \ref{lem:killed}, 
the number of leaders decreases to one within $O(n^2)$ steps
in expectation.
\end{proof}

Lemmas \ref{lem:srl_closed}, \ref{lem:cni}, and \ref{lem:srl}
give the following main theorem. 
\begin{theorem}
Given an integer $N$, $\prl$ is a self-stabilizing 
leader election protocol for any directed rings of any size $n \le N$. 
The convergence time is $O(nN)$.
The number of states is $O(N)$.
\end{theorem}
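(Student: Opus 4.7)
The plan is to assemble the theorem directly from the three preceding lemmas; the substantive work has already been done, and what remains is a short synthesis together with a straightforward state count. First I would bound the expected convergence time. Fix any initial configuration $C_0 \in \call(\prl)$. By Lemma \ref{lem:cni}, execution $\Xi_{\prl}(C_0,\sch)$ reaches some configuration in $\cni$ within $O(nN)$ expected steps. Using the strong Markov property at the first hitting time of $\cni$ together with Lemma \ref{lem:srl}, the continuation reaches $\srl$ in $O(n^2)$ additional expected steps. Since the standing assumption $n \le N$ gives $O(n^2) = O(nN)$, linearity of expectation yields $\ect(C_0, \srl) = O(nN)$ uniformly in $C_0$, which is the stated convergence bound.

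Next I would verify that $\prl$ satisfies Definition \ref{def:ss-le}. Lemma \ref{lem:srl_closed} says that $\srl$ is closed and consists entirely of safe configurations. Combined with the previous paragraph, this shows that $\Xi_{\prl}(C_0,\sch)$ reaches a safe configuration with probability $1$ (in fact within $O(nN)$ expected steps) and stays in safe configurations thereafter, giving exactly the convergence and closure conditions in the definition of a self-stabilizing leader election protocol.

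Finally, for the state count I would simply inspect Algorithm \ref{al:prl}. Each agent stores the variables $\leader \in \{0,1\}$, $\bull \in \{0,1,2\}$, $\shield \in \{0,1\}$, $\signal \in \{0,1\}$, and $\dist \in \{0,1,\dots,N\}$. The product of the ranges is $2 \cdot 3 \cdot 2 \cdot 2 \cdot (N+1) = O(N)$, so $|Q| = O(N)$, as claimed. There is no real obstacle in this proof: the difficult parts --- isolating a suitable closed set $\cni$ of ``well-behaved'' configurations, showing fast convergence into it in $O(nN)$ steps, and showing fast elimination of surplus leaders in $O(n^2)$ steps --- were absorbed into Lemmas \ref{lem:cni}, \ref{lem:srl}, and \ref{lem:srl_closed}. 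The only point that deserves a moment's care is making sure that the $O(n^2)$ term from the second phase is dominated by $O(nN)$, which is where the hypothesis $n \le N$ enters the final bound.
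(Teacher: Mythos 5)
Your proposal is correct and matches the paper's argument, which derives the theorem directly from Lemmas \ref{lem:srl_closed}, \ref{lem:cni}, and \ref{lem:srl} in exactly the way you describe, with the $O(N)$ state bound following from the variable ranges of $\leader$, $\bull$, $\shield$, $\signal$, and $\dist$.
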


\section{Conclusion}
We have presented a self-stabilizing leader election protocol for rings in the population protocol.
Given the knowledge of an upper bound $N$ of the population size $n$, the proposed protocol solves a self-stabilizing leader election problem for any directed ring.
Specifically, an execution of the protocol starting from any initial configuration elects a unique leader within $O(nN)$ steps
in expectation, by using $O(N)$ states per agent. If a given knowledge $N$ is asymptotically tight, \ie $N=O(n)$, this protocol is time-optimal.

%
%
%
%

\bibliographystyle{plain} 
\bibliography{biblio} 


\end{document}